\documentclass[conference]{IEEEtran}
\IEEEoverridecommandlockouts

\usepackage{amsmath,amsfonts}
\usepackage{graphicx}
\usepackage{textcomp}
\usepackage{xcolor}
\usepackage{bbding}
\usepackage{bbm}
\usepackage{bm}
\usepackage{booktabs} 
\usepackage{pifont}
\usepackage{balance}
\usepackage{color}
\usepackage{ifpdf}
\usepackage{latexsym}
\usepackage{paralist}
\usepackage{comment}
\usepackage{xspace}
\usepackage{mathrsfs}
\usepackage{epsf}
\usepackage{setspace}
\usepackage{caption}
\usepackage{mathtools}
\usepackage{multirow}
\usepackage{url,epsfig,array}
\usepackage{leftidx}
\usepackage{epstopdf}
\usepackage{footmisc}
\usepackage{float}
\usepackage{amsthm}
\usepackage{enumitem}
\usepackage{subcaption}
\usepackage{xspace}
\usepackage[table]{xcolor}
\usepackage[noend]{algpseudocode}
\usepackage[ruled,linesnumbered,vlined]{algorithm2e}

\usepackage{marvosym}
\def\ie{\textit{i.e.}\xspace}

\def\eg{\textit{e.g.}\xspace}

\newtheorem{corollary}{Corollary}

\newtheorem{definition}{Definition}
\newtheorem{proposition}{Proposition} 

\def\BibTeX{{\rm B\kern-.05em{\sc i\kern-.025em b}\kern-.08em
    T\kern-.1667em\lower.7ex\hbox{E}\kern-.125emX}}
\begin{document}

\title{NestPipe: Large-Scale Recommendation Training on 1,500+ Accelerators via Nested Pipelining}

\author{\IEEEauthorblockN{Zhida Jiang\textsuperscript{1*}, Zhaolong Xing\textsuperscript{1*}\thanks{\textsuperscript{*}Equal Contribution.}, Huichao Chai\textsuperscript{2}, Tianxing Sun\textsuperscript{1}, Qiang Peng\textsuperscript{1}, Baopeng Yuan\textsuperscript{1}, Jiaxing Wang\textsuperscript{1},
\\Hua Du\textsuperscript{1}, Zhixin Wu\textsuperscript{2}, Xuemiao Li\textsuperscript{2}, Yikui Cao\textsuperscript{2}, Xinyu Liu\textsuperscript{2}, Yongxiang Feng\textsuperscript{2}, Zhen Chen\textsuperscript{1\Letter}\thanks{\textsuperscript{\Letter}Corresponding Author.}, Ke Zhang\textsuperscript{1}}
	\IEEEauthorblockA{
		\textsuperscript{1}\textit{JD.com} \quad\quad\quad
		\textsuperscript{2}\textit{Huawei}
} 
}


\maketitle
\begin{abstract}

Modern recommendation models have increased to trillions of parameters. As cluster scales expand to \textit{O(1k)}, distributed training bottlenecks shift from computation and memory to data movement, especially lookup and communication latency associated with embeddings.
Existing solutions either optimize only one bottleneck or improve throughput by sacrificing training consistency. This paper presents NestPipe, a large-scale decentralized embedding training framework that tackles both bottlenecks while preserving synchronous training semantics.
NestPipe exploits two hierarchical sparse parallelism opportunities through nested pipelining.
At the inter-batch level, Dual-Buffer Pipelining (DBP) constructs a staleness-free five-stage pipeline through dual-buffer synchronization, mitigating lookup bottlenecks without embedding staleness.
At the intra-batch level, we identify the embedding freezing phenomenon, which inspires Frozen-Window Pipelining (FWP) to overlap All2All communication with dense computation via coordinated stream scheduling and key-centric sample clustering. 
Experiments on production GPU and NPU clusters with 1,536 workers demonstrate that NestPipe achieves up to 3.06$\times$ speedup and 94.07\% scaling efficiency.
\end{abstract}

\begin{IEEEkeywords}
  Large-scale Recommendation Training, Embedding Optimization, Nested Pipelining
\end{IEEEkeywords}

\section{Introduction}\label{sec:intro}
Recent advances in the recommendation domain have validated the scaling law similar to that of large language models (LLMs) \cite{zhang2024scaling,han2025mtgr,xu2025climber}.
Scaling up model parameters and training data consistently yields substantial improvements in recommendation quality.
This trend has driven the rapid development of next-generation large-scale recommendation models, which may scale to trillions of parameters \cite{ding2026bending}. 
Despite the evolution of recommendation architectures, sparse embedding tables remain a critical component, enabling effective representation learning of user behaviors and item characteristics \cite{qiu2025evolution,lai2023adaembed,cheng2026conditional}.
In practice, embedding tables dominate the overall parameter footprint of recommendation models and occupy terabytes of memory. 
To enable efficient training of large-scale recommendation models, industrial training systems have expanded to distributed clusters comprising thousands of accelerators (\ie, workers), creating an urgent demand for scalable parallelization strategies tailored to sparse workloads \cite{zhang2025two}.
These massive embedding tables are partitioned across multiple workers via model parallelism \cite{yang2025research}, and each worker leverages a hierarchical storage architecture for expanding available memory beyond HBM constraints \cite{kurniawan2023evstore,zhao2020distributed}.

However, such decentralized training architecture faces a scalability paradox as the cluster scales to thousands of workers (\ie, \textit{O(1k)} and beyond).
While underlying computational power grows significantly, actual training efficiency fails to keep pace as expected.
We highlight that the scalability barrier of sparse training shifts from memory and computation to data movement, especially the \textit{lookup} and \textit{communication} associated with sparse operations.
Firstly, the lookup operations encompass data preprocessing, distributed key routing, embedding retrieval, and host-to-device (H2D) transfers. This overhead, which is negligible in small-scale scenarios, creates severe blocking with increasing batch sizes and sequence lengths.
Secondly, due to model parallelism, the exchange of embedding vectors and their corresponding gradients relies on All2All collective communication. 
Since All2All requires fully connected peer-to-peer data exchange with quadratic connection complexity, communication latency grows super-linearly with cluster scale, hindering efficient embedding training even with state-of-the-art (SOTA) high-speed interconnects.

Although existing asynchronous sparse training schemes \cite{lian2022persia,miao2025efficient,su2022gba,huang2021hierarchical} aim to hide lookup or communication latency, they typically cause parameter staleness and relax the consistency guarantees required by production training. 
Similarly, embedding compression methods \cite{lai2023adaembed,liu2025cafe+,feng2024accelerating,wang2024accelerating} can reduce the volume of transferred data but inevitably introduce information loss.
These solutions often sacrifice training consistency for throughput gains, which may ultimately damage the convergence stability, especially for generative recommendation models that are highly sensitive to parameter consistency \cite{hou2025generative}.
More importantly, most studies are designed for small-scale training configurations and fail to resolve the latency exposed in large-scale industrial deployments, thus lacking scalability.
The most recent SOTA optimization for large-scale embedding training is two-dimensional sparse parallelism \cite{zhang2025two}, which restricts All2All communication within local groups via intra-group model parallelism and inter-group data parallelism. Unfortunately, such topology-oriented paradigms alter parameter update logic and cause potential accuracy loss. Besides, it operates in a traditional synchronous manner, which leaves expensive computing resources idle during communication and thus degrades hardware utilization \cite{li2025slimpipe}.
As a result, existing works are trapped in the accuracy-throughput dilemma and suffer from severe scalability degradation as the cluster scale expands.
This paper argues that the core issue behind accuracy-throughput dilemma is that we always attempt to reduce the absolute lookup or communication overhead, not the \emph{exposed} portion of that overhead on the critical workflow of synchronous training. 
From this perspective, we propose NestPipe, a large-scale decentralized embedding training framework with nested pipelining, which exploits two sparse parallelism opportunities at different spatial granularities, \ie, an inter-batch pipeline window for lookup and an intra-batch frozen window for communication.
Specifically, NestPipe incorporates the Dual-Buffer Pipelining (DBP) strategy, which addresses lookup latency by constructing a staleness-free five-stage pipeline.
DBP strategy preserving embedding freshness in pipelining by forcing precise synchronization points and alternating buffer usage.

Besides, we revisit the updating frequency of sparse embedding vectors and identify the parameter freezing phenomenon to implement the Frozen-Window Pipelining (FWP) strategy. 
Rather than devoting extensive efforts to keeping remote embeddings up-to-date, FWP strategy shifts the optimization perspective from macro to micro and hides All2All communication behind dense computation within a semantically valid frozen window.
At the implementation level, FWP realizes this finer-grained overlap through coordinated computation and communication stream scheduling. 
Sparse communication is launched early within the frozen window and dense computation proceeds on ready micro-batches, thereby maximizing overall resource utilization.
To further balance overlap opportunity and raw communication, FWP incorporates lightweight key-centric sample clustering to improve key deduplication across micro-batches, which reduces repeated embedding transmission and approaches the theoretical exposed ratio.

The combination of DBP and FWP yields a hierarchical sparse parallelism design that addresses the two efficiency bottlenecks introduced by large-scale embedding training. 
We also provide theoretical consistency analysis for NestPipe.
As a result, our solution remains \textit{efficient} in reducing both lookup and communication latency, \textit{consistent} with standard synchronous training semantics, and \textit{scalable} as cluster scale grows.
More importantly, NestPipe optimizes exposed ratio rather than absolute overhead, which makes it naturally orthogonal to existing embedding sharding \cite{zha2023pre,liu2024embedding,zha2022autoshard,wang2024oper,mudigere2022software,zeng2024accelerating}, compression \cite{lai2023adaembed,liu2025cafe+,li2024mixed,huang2026fusedrec,feng2024accelerating,zhou2024dqrm,wang2022rec,wang2024accelerating}, and communication topology optimization \cite{zhang2025two,luo2024disaggregated}.
The main contributions of this paper are summarized as follows:
\begin{itemize}[leftmargin=*]
    \item We propose NestPipe, a decentralized framework that exploits two hierarchical sparse parallelism granularities via nested pipelining.
    NestPipe addresses both lookup and communication bottlenecks exposed by large-scale embedding training while maintaining synchronous training semantics.
    \item At the inter-batch level, DBP strategy hides lookup latency by a staleness-free five-stage pipeline. DBP synchronizes the intersection of active and prefetch buffers to eliminate embedding staleness without breaking pipeline parallelism.
    \item At the intra-batch level, FWP strategy identifies the parameter freezing phenomenon to overlap All2All communication with dense computation via coordinated stream scheduling. FWP further employs key-centric sample clustering to balance overlap opportunity and raw communication.
    \item We implement NestPipe on industrial-grade GPU and NPU clusters ranging from 128 to 1,536 workers.
    Experimental results on different workload characteristics show that NestPipe achieves up to 3.06$\times$ speedup and maintains 94.07\% scaling efficiency compared with SOTA baselines. Integrating with two-dimensional sparse parallelism, the speedup and scalability are further improved to 3.18$\times$ and 97.17\%.
\end{itemize}

\section{Background and Motivation}\label{sec:background}

\subsection{Hybrid Decentralized Training Architecture}

Recent industrial recommender systems often contain trillions of parameters dominated by sparse embeddings, complemented by complex dense layers for computation. 
This expansion is further encouraged by recent research demonstrating that scaling model parameters yields substantial performance gains, ultimately improving return on investment \cite{zhai2024actions,han2025mtgr,ding2026bending}.
To efficiently train such large-scale recommendation models across thousands of workers, training systems often adopt a hybrid parallelism architecture that decouples sparse and dense parameter management \cite{zhao2020distributed,wang2022merlin,DBLP:journals/corr/abs-2509-20883}.
As shown in Fig. \ref{fig:hybrid arch}, sparse embedding tables are sharded across workers via model parallelism and stored in hierarchical host-device memory, while dense layers are replicated via data parallelism.
Each worker exclusively manages a subset of embedding shards.
Hierarchical storage architecture is typically employed to accommodate the embedding shards that exceed the HBM capacity of local workers \cite{kurniawan2023evstore,zhao2020distributed}.
The device memory (HBM) serves as a high-performance cache to maintain the embedding vectors for the current and upcoming training batches \cite{miao2025efficient,song2025unified}.

\begin{figure}
    \centering
    \includegraphics[width=0.5\textwidth]{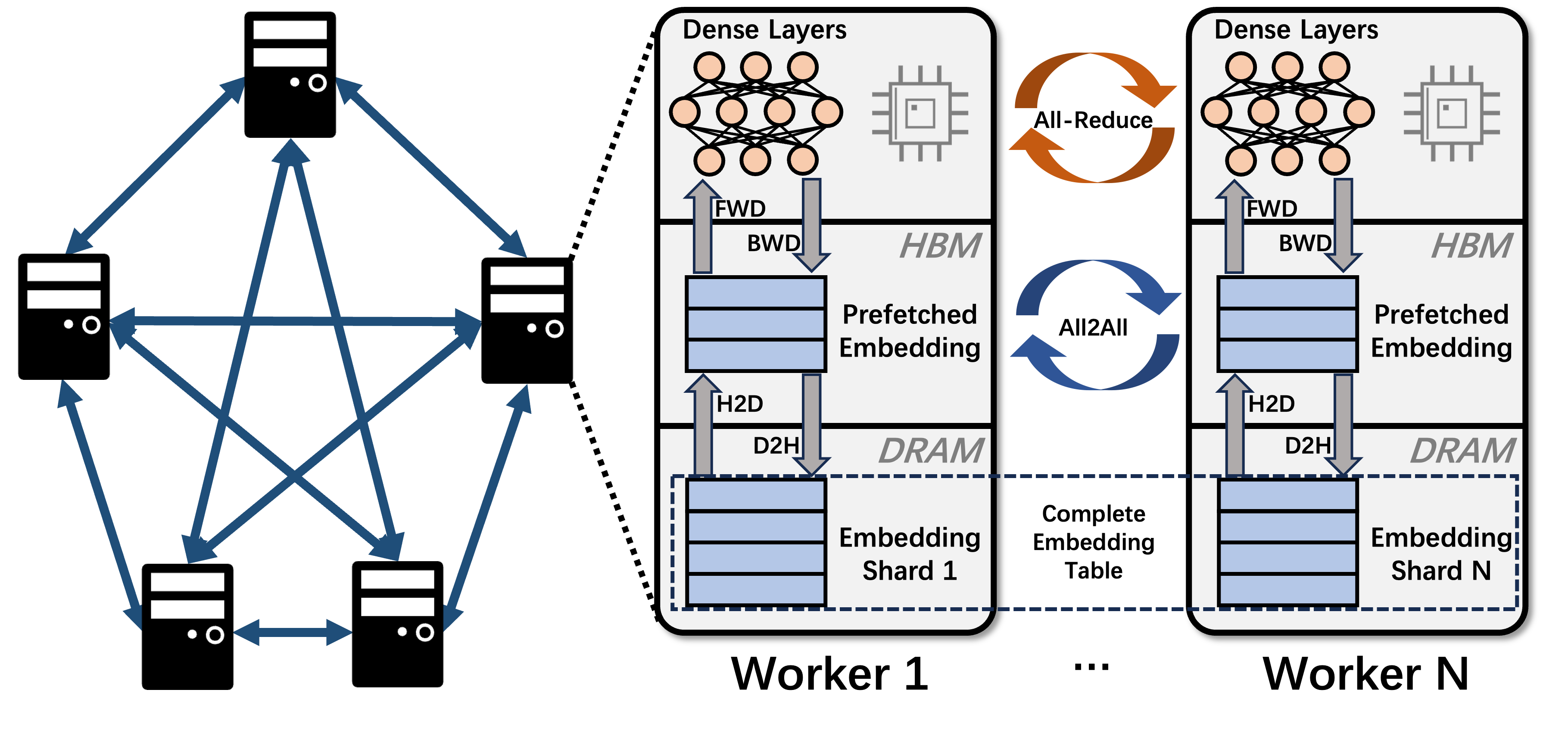} 
    \caption{Hybrid decentralized architecture for large-scale recommendation training. 
    }
    \label{fig:hybrid arch}
    \vspace{-2mm}
\end{figure}

\begin{table*}[t]
\centering
\caption{Comprehensive comparison of existing works with NestPipe.}
\label{tab:comparison}
\resizebox{0.95\textwidth}{!}{
\begin{tabular}{@{}ccc c c c ccc @{}}
\toprule
\multirow{2}{*}{\textbf{Different Methods}} & \multicolumn{2}{c}{\textbf{Efficiency}}  & \multirow{2}{*}{\textbf{Consistency}} &\multirow{2}{*}{\textbf{Scalability}} &\textbf{Orthogonality} \\ \cmidrule(lr){2-3}
& \textbf{Lookup} & \textbf{Communication} & &  & \textbf{with NestPipe}    \\ \midrule

Asynchronous Training (\eg, \cite{lian2022persia,miao2025efficient,su2022gba,huang2021hierarchical,zhang2022picasso,adnan2024heterogeneous,kwon2022training}) & \ding{51}  & \ding{55}       & \ding{55} & \ding{55}  & \ding{55}  \\ 
Embedding Compression (\eg, \cite{lai2023adaembed,liu2025cafe+,li2024mixed,huang2026fusedrec,feng2024accelerating,zhou2024dqrm,wang2022rec,wang2024accelerating}) &\ding{51}  & \ding{51}         & \ding{55} & \ding{55}  & \ding{51}  \\ 
Embedding Sharding and Scheduling (\eg, \cite{zha2023pre,liu2024embedding,zha2022autoshard,wang2024oper,mudigere2022software,zeng2024accelerating}) & \ding{55}  & \ding{51}       & \ding{51} & \ding{55}  & \ding{51}   \\ 
Two-dimensional Sparse Parallelism \cite{zhang2025two} & \ding{55}  & \ding{51}       & \ding{55} & \ding{51}  & \ding{51}   \\ 
\cellcolor{gray!20}\textbf{NestPipe (Ours)}& \cellcolor{gray!20}\ding{51}  & \cellcolor{gray!20}\ding{51}  &\cellcolor{gray!20}\ding{51} & \cellcolor{gray!20}\ding{51}    & \cellcolor{gray!20}\textbf{-}    \\ 
\bottomrule
\end{tabular}}
\vspace{-2mm}
\end{table*}

During the forward propagation of each training step, categorical features in the current batch correspond to embeddings residing on different workers.
Based on the exchanged keys, the workers retrieve the corresponding embedding vectors from host DRAM and then perform the H2D data transfer.
Based on a scalable peer-to-peer topology, the workers exchange the retrieved embedding vectors residing on HBM via All2All communication, ensuring that each worker collects all the embeddings required for the current batch. 
After dense computation, the gradients are routed back to their owner workers via All2All communication and aggregated to update the corresponding embedding vectors.
Finally, the embedding vectors are written back to host DRAM to maintain the consistency of complete embedding table in the hierarchical storage.
In comparison, dense layers are orders of magnitude smaller in size, and their gradients are synchronized through All-Reduce collective communication \cite{he2025hypereca}.

\begin{figure}
    \centering
    \begin{subfigure}{0.49\columnwidth}
        \centering
        \includegraphics[width=\linewidth]{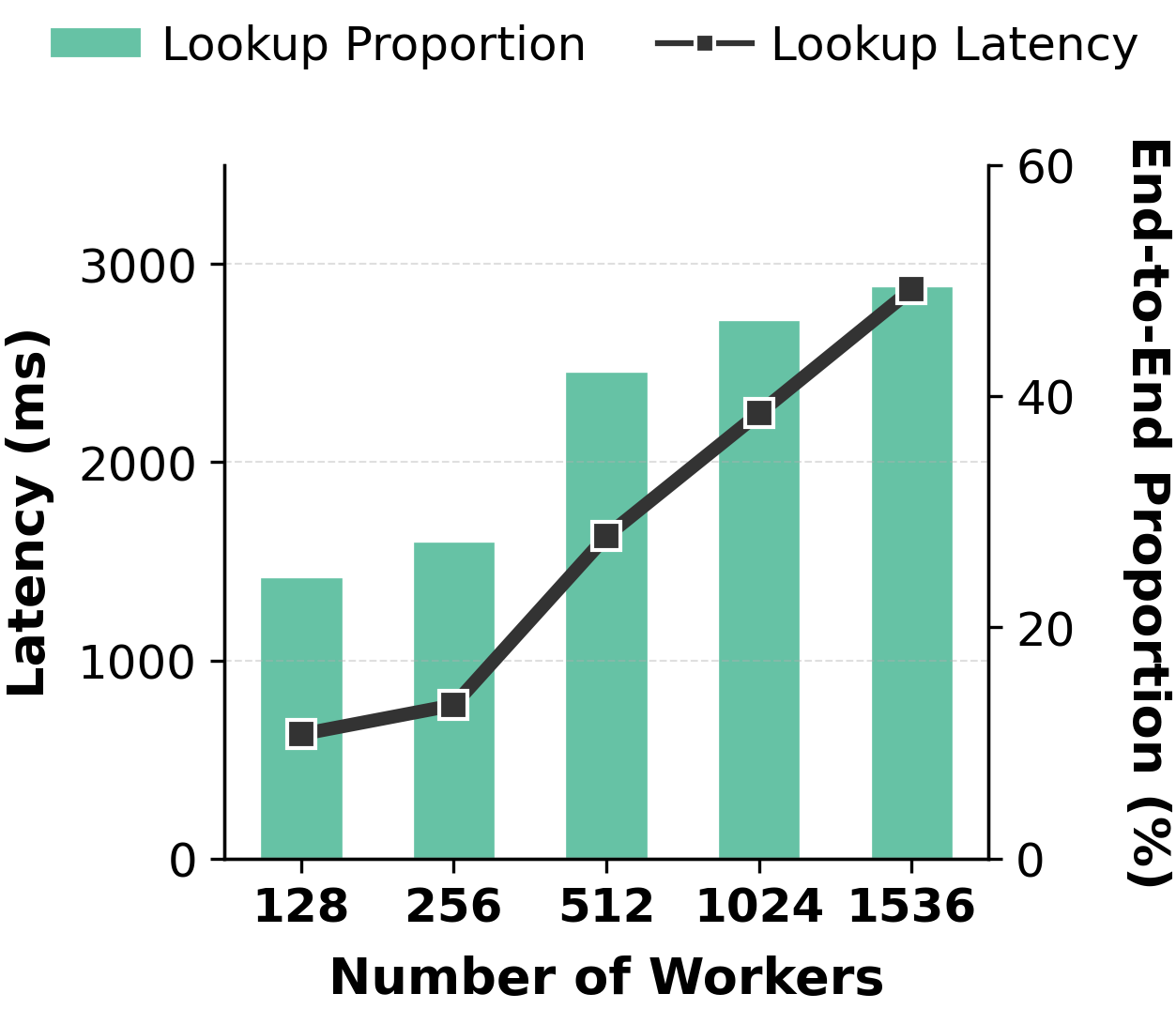}
        \caption{Lookup Overhead}
        \label{fig:io_overhead}
    \end{subfigure}
    \hfill 
    \begin{subfigure}{0.49\columnwidth}
        \centering
        \includegraphics[width=\linewidth]{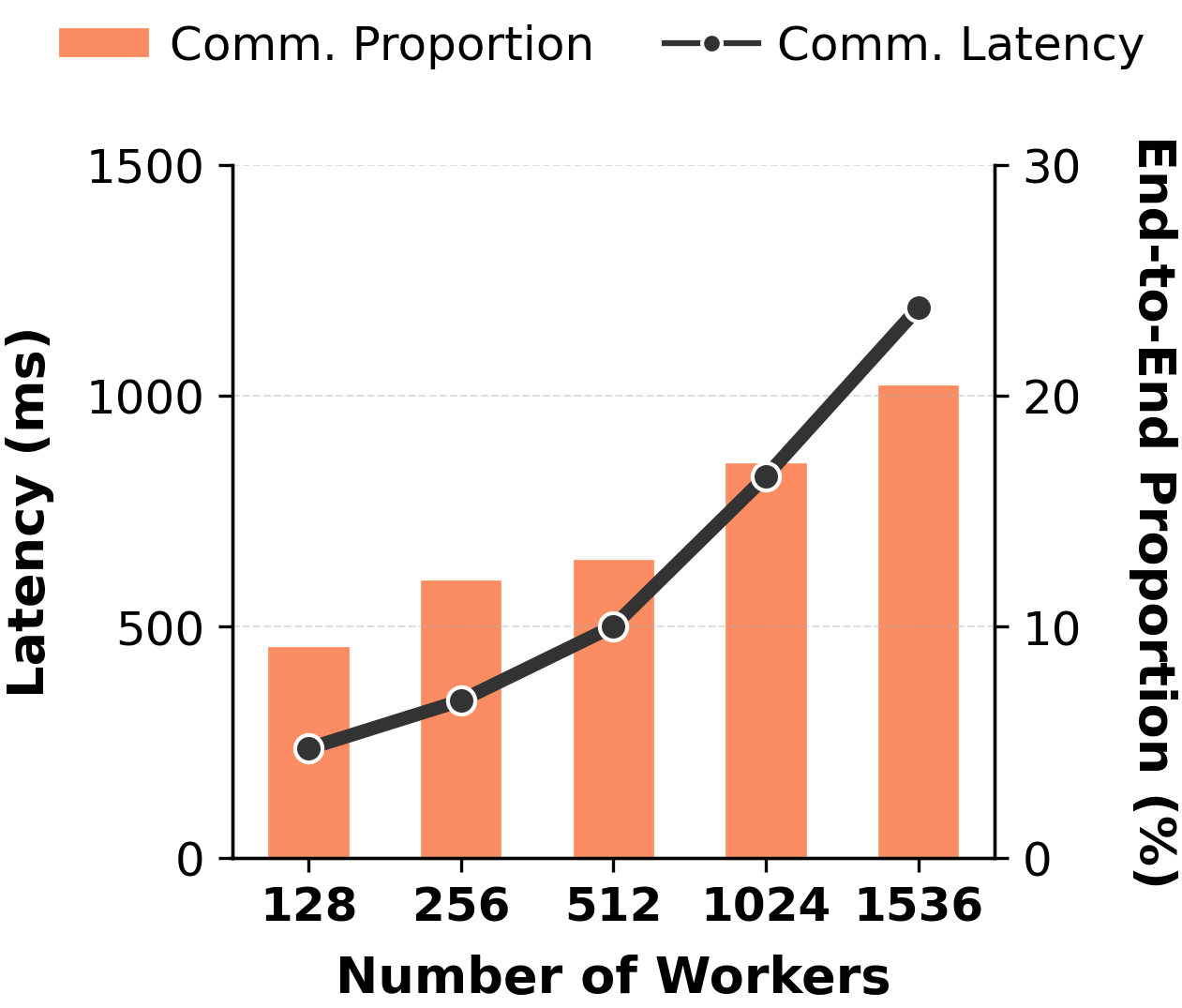}
        \caption{Communication Overhead}
        \label{fig:comm_overhead}
    \end{subfigure}
    
    \caption{Impact of cluster scale on sparse lookup and communication overhead. 
    }
    \label{fig:bottleneck}
    \vspace{-3mm}
\end{figure}

\subsection{Bottlenecks in Large-scale Embedding Training}
Although hybrid decentralized architecture effectively manages the memory footprint of colossal embedding tables, we highlight fundamental scaling barriers of decentralized embedding training shift from memory and computation to data movement.
Driven by the scaling law, lookup and communication overhead grow super-linearly with worker count.
\textbf{Lookup Bottleneck.} Embedding lookup workflow involves CPU-side data preprocessing, distributed key routing, embedding retrieval, and H2D transfers \cite{wang2024oper}. 
These operations incur negligible overhead in small-scale deployments. However, as the training cluster scales to thousands of workers, coupled with larger batch sizes and longer behavior sequences, the cumulative latency is drastically amplified. 
To verify the efficiency bottleneck under large-scale embedding training, we conduct preliminary experiments on an industrial-grade NPU cluster. As illustrated in Fig. 
\ref{fig:bottleneck}(a), lookup latency accounts for only 24.4\% of total training time when the number of workers is 128. This proportion surges to 49.6\% as the number of workers increases to 1,536. 

\textbf{Communication Bottleneck.} Due to model parallelism, the workers exchange required embedding vectors and corresponding gradients via All2All communication primitive. 
Under fully connected peer-to-peer transmission, connection complexity grows quadratically with the number of workers \cite{zhang2025two}. As shown in Fig. \ref{fig:bottleneck}(b), absolute communication latency exhibits a significant upward trend with cluster expansion. 
Moreover, the time proportion increases continuously from 9.2\% to 20.5\%, even with high-speed interconnection networks.
Under strict synchronous training paradigm, all workers must complete data exchange before entering the subsequent dense computation, which forces expensive computing resources to be idle and creates extensive pipeline bubbles \cite{guo2025adaptis}. 

\subsection{Limitations of Existing Works}

Asynchronous training has been proposed to mitigate the lookup bottlenecks \cite{lian2022persia,miao2025efficient,su2022gba,huang2021hierarchical}. 
Other pipeline parallelism schemes \cite{zhang2022picasso,adnan2024heterogeneous,kwon2022training} decouple data loading from model computation by prefetching embedding vectors for future batches while the worker computes the current batch. 
However, these solutions may cause parameter staleness and fundamentally lack reproducibility, thus compromising model convergence \cite{agarwal2023bagpipe}.
Furthermore, the prohibitive communication bottleneck also continues to limit their effectiveness in decentralized embedding training.
Even if local lookup latency is hidden, scaling to thousands of workers still exacerbates the communication overhead imposed by the All2All primitive.
Another category of works reduces communication overhead via embedding compression, such as hashing \cite{lai2023adaembed,liu2025cafe+,li2024mixed,huang2026fusedrec}, quantization \cite{feng2024accelerating,zhou2024dqrm}, and tensor-train (TT) decomposition \cite{wang2022rec,wang2024accelerating}.
They focus on reducing the absolute volume of transferred data by representing embedding tables in compact forms, but inevitably introduce approximation error. Even minor accuracy degradation (\eg, 0.1\%) is unacceptable in industrial recommendations since it can directly translate to significant revenue loss.
Some literature \cite{zha2023pre,liu2024embedding,zha2022autoshard,wang2024oper,mudigere2022software,zeng2024accelerating} has optimized embedding table sharding, placement, and scheduling to balance workloads and reduce communication hot-spots, but does not directly eliminate the lookup and communication latency exposed by synchronous execution. 
More importantly, the above methods are designed for small-scale training and cannot address the inherent challenges of decentralized embedding training with over \textit{O(1k)} workers.

The most recent work is the two-dimensional sparse parallelism \cite{zhang2025two}, which implements intra-group model parallelism and inter-group data parallelism. Although this method restricts the communication domain and peak memory footprint for large-scale embedding training, it is constrained by synchronization paradigms and leaves expensive computing resources idle during communication, resulting in low hardware utilization and limited latency reduction.
Besides, two-stage gradient aggregation via modified network topology will lead to model convergence deviation.
As summarized in Table \ref{tab:comparison}, most existing works have succumbed to the accuracy-throughput dilemma. 
They either sacrifice strict consistency for higher training throughput or only address a single bottleneck (lookup or communication) while leaving the other unoptimized. 
Besides, these works are inherently confined to small-scale training configurations and thus suffer from severe scalability degradation as the cluster scale expands.

\section{Overview of NestPipe}\label{sec:design}

Existing solutions always focus on reducing the absolute magnitude of lookup or communication overhead. 
For large-scale decentralized embedding training, the key problem is not only how much sparse overhead exists in total, but how much of it remains exposed on the end-to-end workflow.
From this perspective, we design a decentralized framework, called NestPipe, which addresses both lookup and communication bottlenecks of large-scale embedding training while maintaining synchronous semantics.
As illustrated in Fig. \ref{fig:overview}, our nested pipelining design exploits two sparse parallelism opportunities at different spatial granularities.


At the inter-batch level, DBP leverages distinct hardware resources and decomposes the embedding lookup workflow into five-stage pipeline across consecutive batches (Section \ref{sec:DBP}). At the intra-batch level, FWP exploits the parameter freezing phenomenon in micro-batch training to decouple All2All communication from dense computation (Section \ref{sec:FWP}). 
While batch 4 undergoes forward and backward computation, batch 1/2/3 concurrently progress through key routing, data H2D, and data prefetch stages, which fully hides lookup latency. 
Meanwhile, entering the fine-grained pipeline within a single batch (\eg, batch 4 with four micro-batches), FWP overlaps the embedding All2All communication of each micro-batch with the dense computation of its adjacent micro-batches via carefully orchestrated scheduling streams.
In other words, DBP addresses the latency before sparse embeddings become ready in HBM, whereas FWP addresses the latency after sparse embeddings need to be exchanged across workers.


Such a hierarchical sparse parallelism design is how NestPipe fulfills three design goals of \emph{efficiency}, \emph{consistency}, and \emph{scalability}, as empirically verified in Section \ref{sec:Evaluation}.
The combination of DBP and FWP simultaneously addresses lookup and communication bottlenecks exposed by decentralized embedding training. 
Moreover, DBP eliminates the embedding staleness in naive pipelining by dual-buffer synchronization before each batch initiates its forward propagation, while FWP guarantees that communication overlap occurs only within a semantically valid frozen window, thus maintaining strict consistency to synchronous training.
Finally, unlike prior solutions limited to small-scale settings, NestPipe preserves near-linear scaling efficiency even with up to thousands of workers.



\begin{figure}
    \centering
    \includegraphics[width=0.47\textwidth]{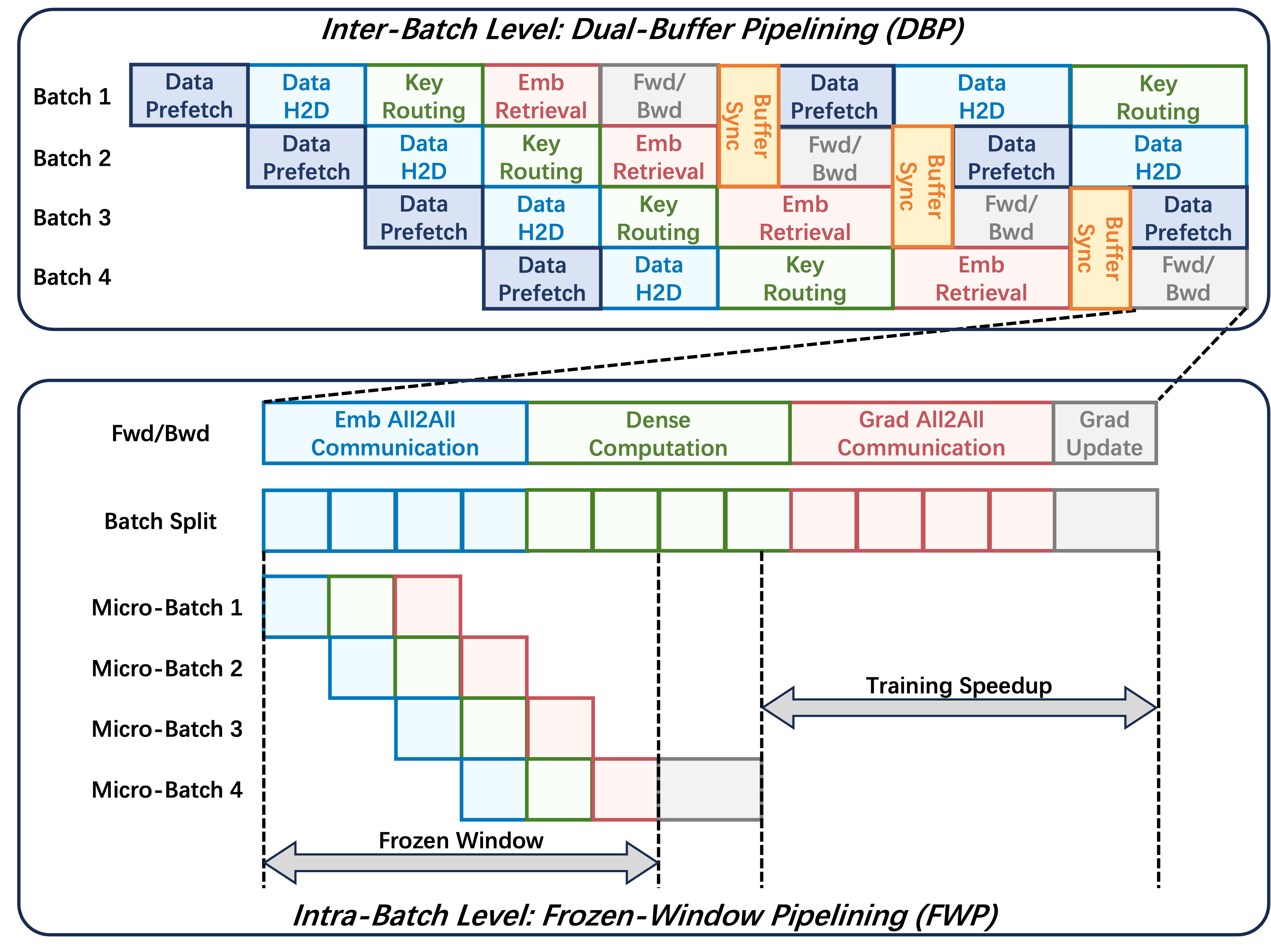} 
    \caption{Overview of NestPipe. 
    }
    \label{fig:overview}
    \vspace{-3mm}
\end{figure}

\section{Inter-Batch Level: Dual-Buffer Pipelining}\label{sec:DBP}
\subsection{Staleness-free Five-stage Parallelism}
To tackle the issue of lookup overhead, we design Dual-Buffer Pipelining (DBP) strategy.
Sparse lookup is not a single operation but a multi-stage data-movement. Before a batch can enter effective model computation, we must complete CPU-side data preprocessing, distributed key routing,
embedding retrieval, and H2D transfers.
We analyze the dependency chains in decentralized training architecture and identify distinct hardware resources (\eg, CPU for preprocessing, network for communication, HBM for embedding storage, accelerators for computation). This allows us to transform the serial workflow into parallel stages and leverage the predictable nature of future batches to optimize data movement.
However, the naive pipeline design will introduce the risk of embedding staleness. 
Since embedding accesses follow a highly skewed distribution \cite{miao2025efficient}, a small subset of popular embeddings frequently participates in the recommendation training of consecutive batches.
Let $\mathcal{B}_t$ denote the batch at training step $t$.
The prefetched embeddings for $\mathcal{B}_{t}$ might become stale if they are not updated by the batch $\mathcal{B}_{t-1}$ still in the pipeline \cite{jung2026mitigating}. To this end, DBP integrates dual-buffer synchronization and enables a staleness-free five-stage pipeline. We detail each stage and their coordination below:


\begin{itemize}
    \item \textbf{Data Prefetch}: The workers read raw user-item interaction logs and prefetch the structured data $\mathcal{B}_t$ into \textit{pinned memory} (non-pageable memory), which eliminates OS-level memory paging overhead for fast subsequent data transfers \cite{agarwal2014design}. 
    \item \textbf{Data H2D}: The prepared data $\mathcal{B}_t$ is asynchronously copied to the HBM. Benefiting from pinned memory and direct memory access \cite{thieu2025dcma}, this stage significantly reduces H2D latency compared to transfers from standard host memory.
    \item \textbf{Key Routing}: The sparse keys within the batch $\mathcal{B}_t$ are first deduplicated to reduce redundant communication and then partitioned into buckets based on embedding table sharding rules, which is aligned with model parallelism. The bucketed keys are routed to destination workers hosting the corresponding embedding vectors via All2All communication. Since keys are orders of magnitude smaller than embedding vectors or gradients, this key transmission remains lightweight and rarely bottlenecks the overall system performance.
    \item \textbf{Embedding Retrieval}: Upon receiving the keys, each destination worker again performs sparse key deduplication to eliminate redundant keys from different peer source workers, thereby directly reducing the overhead of subsequent embedding lookups. The destination workers then query the embedding vectors indexed by these deduplicated keys from the embedding tables. The retrieved embeddings are transferred from host memory (DRAM) to device memory (HBM). The destination workers perform dual-buffer synchronization for two consecutive training batches $\mathcal{B}_t$ and $\mathcal{B}_{t-1}$ to propagate the latest parameter updates (detailed in Section \ref{subsec:dual-buffer}).
    \item \textbf{Fwd/Bwd}: 
    The All2All communication primitive is adopted to send the synchronized embeddings back to the source workers that initially requested them. 
    For now, each source worker obtains the complete set of embedding vectors required by its local batch $\mathcal{B}_t$. 
    They perform forward and backward propagation to compute gradients for both embeddings and dense layers, which are subsequently synchronized among all workers via All2All and AllReduce, respectively. Finally, the updated embedding vectors are written back to host memory.
\end{itemize}
Each stage is designed to utilize distinct hardware resources to avoid resource contention. 
The stage division follows the principle of fine-grained resource decoupling and overlapping execution. 
Fewer stages would limit overlapping opportunities, while additional stages may introduce unnecessary complexity without proportional performance gains \cite{zhang2025td}.
By parallelizing these stages across multiple training batches, NestPipe strategically hides the lookup latency of data preprocessing, key routing, and embedding retrieval. The concurrent execution turns idle waiting time into productive operation \cite{guo2025adaptis}, which is particularly effective in large-scale embedding access scenarios where these latencies are more pronounced. 

\subsection{Dual-buffer Synchronization}\label{subsec:dual-buffer}
To avoid embedding staleness, DBP strategy maintains two HBM buffers under a producer-consumer pattern:
\begin{itemize}
    \item \textbf{Active HBM Buffer} serves the forward and backward propagation for the current batch. The gradients are applied directly to the embedding vectors in this buffer to keep them up-to-date.
    \item \textbf{Prefetch HBM Buffer} is used to preload embeddings for the next batch while the current batch performs forward and backward propagation.
\end{itemize}
The workflow of dual-buffer synchronization is shown in Fig. \ref{fig:dual-buffer}.
After retrieving the embedding vectors in active buffer, batch $\mathcal{B}_{t-1}$
performs forward and backward propagation. In parallel, the prefetch buffer preloads the embedding vectors required by batch $\mathcal{B}_{t}$ to reduce the lookup latency. 
Before batch $\mathcal{B}_{t}$ in prefetch buffer begins its forward propagation, DBP strategy takes the intersection of the two buffers ($\mathcal{B}_{t-1} \cap \mathcal{B}_{t}$). 
In practice, a dedicated kernel computes the intersection of compact key sets between consecutive batches, rather than full embedding vectors. 
Then, DBP strategy performs embedding synchronization of active buffer and prefetch buffer via fast device-to-device memory copies. The overhead of dual-buffer synchronization is typically lower than 2ms, which can be fully overlapped with other concurrent stages.

After the buffer synchronization completes, the prefetch buffer contains the latest parameters updated by batch $\mathcal{B}_{t-1}$ in active buffer.
Then, batch $\mathcal{B}_{t}$ can start its All2All communication immediately with validated fresh embeddings. Meanwhile, the updated embeddings from the finished batch $\mathcal{B}_{t-1}$ are written back to the host memory to maintain the consistency of the complete embedding table in the hierarchical storage architecture.
DBP strategy then switches the active buffer pointer. In other words, the roles of the active buffer and prefetch buffer alternate for each subsequent batch. The original active buffer serving batch $\mathcal{B}_{t-1}$ is converted to the prefetch buffer for the next batch ($\mathcal{B}_{t+1}$), while the synchronized prefetch buffer for batch $\mathcal{B}_{t}$ takes over as the new active buffer.
The embedding synchronization at buffer intersection guarantees parameter freshness without breaking pipeline parallelism.


\begin{figure}
    \centering
    \includegraphics[width=0.43\textwidth]{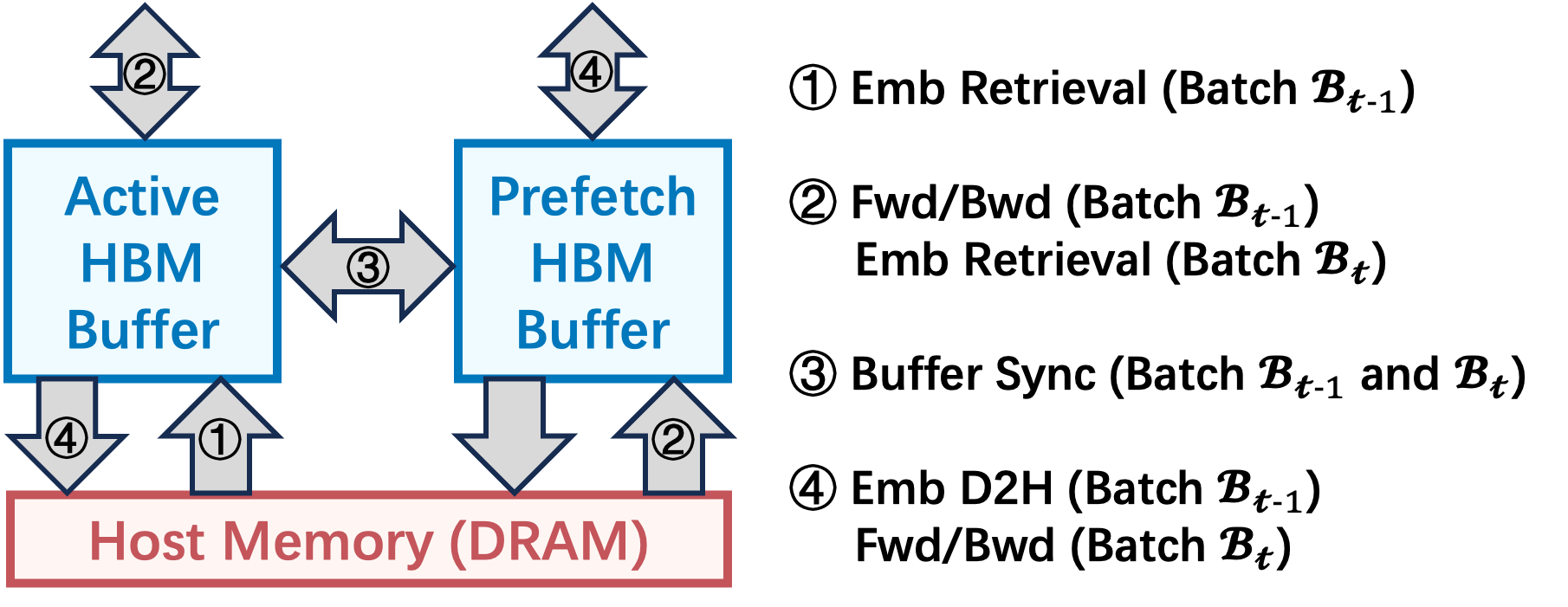} 
    \caption{Dual-buffer synchronization in DBP strategy.
    }
    \label{fig:dual-buffer}
    \vspace{-2mm}
\end{figure}

\begin{figure*}
    \centering
    \includegraphics[width=\textwidth]{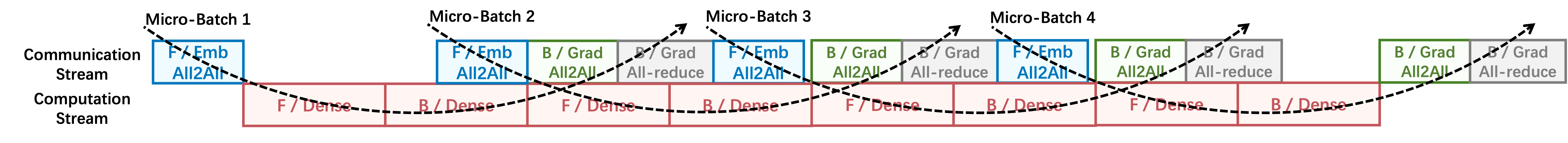} 
    \caption{Implementation of FWP strategy through coordinated communication and computation stream scheduling.
    }
    \label{fig:stream}
    \vspace{-3mm}
\end{figure*}

\section{Intra-Batch Level: Frozen-Window Pipelining}\label{sec:FWP}
\label{FWP}

\subsection{Intra-batch Communication Overlap}
Although the dual-buffer pipelining design effectively masks the lookup latency, the All2All communication during forward and backward propagation still hinders efficient decentralized training.
A straightforward intuition is to extend the five-stage pipeline into a six-stage version by parallelizing the dense computation of the batch $\mathcal{B}_{t-1}$ and the All2All communication of the next batch $\mathcal{B}_{t}$. 
In the original DBP strategy, embedding All2All communication of $\mathcal{B}_{t}$ is only initiated after the dense computation of $\mathcal{B}_{t-1}$ is completed, ensuring embedding freshness.
If DBP is extended to the six-stage pipeline, the gradient generated by $\mathcal{B}_{t-1}$ cannot leverage dual-buffer synchronization to update the embedding vectors that have already been transmitted via All2All communication. As a result, the embeddings used in the forward propagation of $\mathcal{B}_{t}$ do not reflect the most recent weight updates.
Inconsistent gradients accumulate across iterations, which leads to undesirable accuracy degradation and even impairs the convergence stability, especially for generative recommendation models that are sensitive to parameter consistency.
The six-stage extension appears to maximize the overlap of computation and communication, but it results in the one-step asynchrony issue in practice, invalidating the efficiency gain from parallelism.


In fact, the one-step asynchrony issue arises because we always attempt to keep the embeddings up-to-date while simultaneously transmitting them to remote workers, which is quite challenging during sparse pipelining.
In particular, we revisit the updating frequency of sparse parameters and provide a fundamentally different perspective: if the parameters do not change during a specific window, then the parameters transmitted during that window are naturally ``fresh'' and there is no newer version to miss.
On this basis, we note the \textit{parameter freezing phenomenon} in the micro-batch training, where the forward and backward propagation of a single micro-batch calculates the embedding gradients without performing the actual parameter updating \cite{wang2024accelerating,wan2026revisiting}, which creates an implicit window for parallelizing communication and computation without introducing training inconsistency.

Based on the above insight, NestPipe introduces Frozen-Window Pipelining (FWP) to address the communication bottleneck via fine-grained optimization. By splitting the training batch $\mathcal{B}_t$ into $N$ micro-batches $\{\mathcal{M}_1^t, \mathcal{M}_2^t, \cdots, \mathcal{M}_N^t\}$, we can decouple the All2All communication from the dense computation. 
As illustrated in Fig. \ref{fig:overview}, we perform the embedding All2All communication, dense computation, and gradient All2All communication of all micro-batches in parallel. 
The embeddings will not be updated during the frozen window.
The gradients are applied only after all micro-batches of batch $\mathcal{B}_t$ complete their gradient computation. 
As a result, the embedding vectors used by any micro-batch $\mathcal{M}_i^t$ ($1\le i\le N$) are always the latest version, which avoids the one-step asynchrony and maintains equivalence to synchronous training. 

\subsection{Stream Scheduling}
At the implementation level, NestPipe optimizes resource orchestration through stream scheduling.
As shown in Fig. \ref{fig:stream}, FWP strategy is implemented by two independent and coordinated execution streams: 
\begin{itemize}
    \item \textbf{Computation Stream} is dedicated to the dense forward and backward propagation of micro-batches. The gradient calculation and parameter update operations are scheduled on the computing cores to maximize resource utilization. 
    \item \textbf{Communication Stream} is responsible for all data transmission operations in the training process, including the All2All communication of embedding parameters/gradients and All-Reduce communication of dense layers, which is scheduled on the interconnect network.
\end{itemize}
The scheduling principle is that communication should be launched as early as possible within the frozen window, and computation should consume ready micro-batches without waiting for unrelated communication to finish.
Once the sparse embeddings of a micro-batch $\mathcal{M}_i^t$ become available in local HBM, the computation stream proceeds to dense computation independently while the communication stream advances the All2All communication for subsequent micro-batches $\mathcal{M}_{i+1}^t$ \cite{zhang2025comet}.
These streams coordinate through carefully designed synchronization points that align with the micro-batch boundaries. 
With the help of decoupled stream scheduling, FWP strategy eliminates hardware contention between computation and communication, further improving the overall hardware utilization of the cluster.

\subsection{Sample Clustering}
In our FWP strategy, the micro-batch size has a critical impact on the trade-off between physical communication overhead and exposed ratio.
Let $N$ denote the number of micro-batches within a batch ($N=4$ in Fig. \ref{fig:stream}). Each training step involves $2N$ All2All communications, including $N$ embedding All2All and $N$ gradient All2All communications. Since FWP enables full overlap of all intra-batch communication except for the first and last boundary communication, the theoretical exposed communication ratio is exactly $1/N$. 
Although a smaller micro batch size (larger $N$) reduces the exposed boundary communication ratio, key deduplication for embedding communication is inefficient.
Since sparse keys are deduplicated within individual small micro-batches, redundant keys scattered across different micro-batches cannot be eliminated in naive micro-batch splitting, leading to repeated transmission of the same embeddings in $2N$ All2All operations.
As a result, the inflated communication payload may exceed the available computation window, causing overlap to collapse and ultimately breaking the intended benefit of FWP strategy.

To strengthen deduplication efficiency, we incorporate a lightweight key-centric sample clustering scheme into FWP. Specifically, we group samples that share more sparse keys into the same micro-batch, maximizing key redundancy within micro-batch \cite{wang2024accelerating}.
This reduces repeated embedding transmission and helps FWP achieve its theoretical exposed ratio in practice.
More importantly, sample clustering only changes the order of embedding and gradient communication, without modifying the embedding values used in forward propagation or the final gradients for each key.
Therefore, it does not affect the model convergence behavior, which is also empirically validated in Section \ref{sec:Evaluation}.
To ensure clustering operation does not increase the end-to-end training time, the sample clustering can be executed asynchronously on the CPU as part of the data preprocessing stage in DBP strategy, or pre-computed offline. By decoupling it from the active computation stream, the corresponding overhead is successfully hidden behind the concurrent dual-buffer pipeline. 


\section{Theoretical Consistency Analysis}\label{sec:convergence}

In this section, we formally prove that NestPipe maintains mathematical equivalence to standard synchronous training. 

\begin{definition}(Synchronous Training Consistency)
$W_t = (\theta_t, E_t)$ denotes the model parameters at step $t$, where $\theta_t$ represents the dense layers and $E_t = \{e_k^t\}_{k \in \mathcal{V}}$ represents the sparse embedding table over vocabulary $\mathcal{V}$. Let $\mathcal{K}(\mathcal{B}_{t}) \subseteq \mathcal{V}$ define the set of distinct sparse keys accessed by batch $\mathcal{B}_{t}$. Given a learning rate $\eta$ and a loss function $F$, the standard synchronous update is:
\begin{equation}
W_{t+1} = W_t - \eta \frac{1}{|\mathcal{B}_{t}|} \sum_{\xi \in \mathcal{B}_{t}} \nabla F(W_t, \xi)
\label{eq:sync}
\end{equation}
At each step $t$, any deviation from this formulation breaks training consistency, such as computing gradients using stale weights $W_{t-\tau}$ ($\tau > 0$) or altering gradient aggregation logic.
\end{definition}

\begin{proposition}
(Consistency of DBP)
Under the DBP strategy with dual-buffer synchronization, the parameters available to batch $\mathcal{B}_{t+1}$ are exactly $W_{t+1}$.
\end{proposition}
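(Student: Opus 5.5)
The plan is an induction on the step index $t$, with the hypothesis that when batch $\mathcal{B}_t$ begins forward propagation, the active buffer holds $\theta_t$ and $e_k^t$ for every $k \in \mathcal{K}(\mathcal{B}_t)$, and the host table together with all buffers agrees with $E_t$ on every key not currently in flight. The base case $t=0$ is immediate, because nothing has been prefetched before initialization. Dense parameters $\theta$ are replicated and updated by AllReduce, which is not part of the pipeline. Hence $\theta_{t+1}$ is available to $\mathcal{B}_{t+1}$ exactly as in synchronous training, and the whole argument reduces to the sparse part $E$.

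For the inductive step, I first use the update rule (\ref{eq:sync}) restricted to embeddings. Its gradient $\nabla_{e_k} F(W_t,\xi)$ vanishes unless $k \in \mathcal{K}(\mathcal{B}_t)$, so $e_k^{t+1} = e_k^t$ for $k \notin \mathcal{K}(\mathcal{B}_t)$. For $k \in \mathcal{K}(\mathcal{B}_t)$, the value $e_k^{t+1}$ is the value written into the active buffer after the gradient All2All aggregates all contributions for $k$. By the induction hypothesis, $\mathcal{B}_t$ computes its gradients at $W_t$, so this entry equals $e_k^{t+1}$ exactly. Next I split $\mathcal{K}(\mathcal{B}_{t+1})$ into $S = \mathcal{K}(\mathcal{B}_{t+1}) \cap \mathcal{K}(\mathcal{B}_t)$ and its complement.

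For $k \notin \mathcal{K}(\mathcal{B}_t)$, the prefetched value came from host DRAM during retrieval. I argue it equals $e_k^{t'}$ for the latest step $t' \le t$ that touched $k$, so it equals $e_k^{t+1}$ because steps $t'+1,\dots,t$ left $k$ unchanged. For $k \in S$, the prefetched value may be stale, but the dual-buffer synchronization copies the active-buffer entry into the prefetch buffer before $\mathcal{B}_{t+1}$ starts. That entry is $e_k^{t+1}$. The buffer swap then makes the prefetch buffer active, which closes the induction.

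The main obstacle is the first case: showing that the host copy read during prefetch of $\mathcal{B}_{t+1}$ is not stale with respect to earlier batches $\mathcal{B}_{t-1},\mathcal{B}_{t-2},\dots$ that were still in the pipeline when the retrieval happened. Dual-buffer synchronization only repairs staleness against $\mathcal{B}_t$. I would handle this with an explicit ordering invariant from the five-stage schedule. Retrieval for $\mathcal{B}_{t+1}$ occurs after $\mathcal{B}_{t-1}$ has finished Fwd/Bwd and its write-back to host has completed, so any value it reads reflects all updates through step $t-1$. Only updates from $\mathcal{B}_t$ can be missing, and those are exactly the keys in $S$. If the pipeline depth lets retrieval overlap with more than one preceding batch, I would instead strengthen the synchronization to intersect with the active buffer, which carries the cumulative latest values of all in-flight keys, and state this as the assumption on the schedule.
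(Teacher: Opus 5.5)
This is essentially the paper's proof. You use the same split of $\mathcal{K}(\mathcal{B}_{t+1})$ into keys shared with $\mathcal{B}_t$, which the dual-buffer copy overwrites with $e_k^{t+1}$, and the remaining keys, which step $t$ leaves unchanged because their gradients vanish, with $\theta_{t+1}$ supplied by AllReduce.

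Your induction and write-back-before-retrieval invariant usefully make explicit what the paper silently assumes when it calls the prefetched value $e_k^t$. That ordering must remain a stated assumption, since the paper writes $\mathcal{B}_{t-1}$ back ``meanwhile'' with $\mathcal{B}_t$'s All2All, i.e., while $\mathcal{B}_{t+1}$ is being retrieved. Your fallback would not repair it if it failed, because keys touched only by $\mathcal{B}_{t-1}$ sit in the other buffer, not the active one.
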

\begin{proof}
    The prefetch buffer $H_{\mathrm{pref}}$ loads embeddings for $\mathcal{B}_{t+1}$ while the active buffer $H_{\mathrm{act}}$ serves the forward and backward propagation of $\mathcal{B}_{t}$. 
For each $k \in \mathcal{K}(\mathcal{B}_{t+1})$, we have:
\begin{equation}
e_k^{t+1} =
\begin{cases}
e_k^t, & k \notin \mathcal{K}(\mathcal{B}_{t}) \\
e_k^t - \eta \frac{1}{|\mathcal{B}_{t}|} \sum_{\xi \in \mathcal{B}_{t}} \nabla_{e_k} F(W_t, \xi), & k \in \mathcal{K}(\mathcal{B}_{t})
\end{cases}
\label{eq:dbp}
\end{equation}
For non-overlapping keys, $\nabla_{e_k} F(W_t, \xi) = \mathbf{0}$. The prefetched embedding $e_k^t$ is already up-to-date.
For overlapping keys $k \in \mathcal{K}(\mathcal{B}_{t}) \cap \mathcal{K}(\mathcal{B}_{t+1})$, the embedding $e_k^t$ in $H_{\mathrm{pref}}$ is strictly overwritten by the updated value from $H_{\mathrm{act}}$ via dual-buffer synchronization.
Combining with $\theta_{t+1}$ obtained via AllReduce synchronization, $\mathcal{B}_{t+1}$ uses the rigorously validated $W_{t+1} = (\theta_{t+1}, E_{t+1})$ rather than $W_{t-\tau}$ to compute gradients.
\end{proof}

\begin{proposition}
(Consistency of FWP)
Under the FWP strategy with micro-batch training and sample clustering, the parameter update is exactly equivalent to Eq.~\eqref{eq:sync}.
\end{proposition}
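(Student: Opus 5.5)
The plan is to show that FWP changes only \emph{when} quantities are computed and communicated, never \emph{which} parameter values enter the gradient or how gradients are aggregated. Fix step $t$ and write $\mathcal{B}_t = \bigsqcup_{i=1}^N \mathcal{M}_i^t$, where the micro-batches form a disjoint partition of the samples. Sample clustering only chooses this partition, so it is just one particular partition among many. The target is to prove that the update actually applied at the end of step $t$ equals the right-hand side of Eq.~\eqref{eq:sync}. By the Consistency of DBP proposition we may assume every micro-batch starts from the exact parameters $W_t=(\theta_t,E_t)$ held in the active buffer.

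\textbf{Step 1 (frozen window).} Show by induction on the micro-batch index $i$ that the parameters used in the forward and backward pass of $\mathcal{M}_i^t$ are exactly $W_t$. The base case is DBP consistency. For the inductive step, FWP applies no update to $E$ or $\theta$ until all $N$ micro-batches have finished their gradient computation. The gradient All2All of $\mathcal{M}_j^t$ for $j<i$ only moves gradient buffers and does not write parameters. Hence the embeddings $e_k^t$ sent by the embedding All2All of $\mathcal{M}_i^t$ are unchanged whenever that All2All is launched, even if it overlaps dense compute of other micro-batches. Stream scheduling and early launching therefore affect only timing.

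\textbf{Step 2 (gradient aggregation).} For each micro-batch, the per-key gradient $g_k^{(i)}=\sum_{\xi\in\mathcal{M}_i^t}\nabla_{e_k}F(W_t,\xi)$ is routed to the owner of $k$, and the owner accumulates $\sum_i g_k^{(i)}$. For dense parameters the accumulated local gradients are AllReduced. Because the $\mathcal{M}_i^t$ partition $\mathcal{B}_t$ and addition is associative and commutative, $\sum_i\sum_{\xi\in\mathcal{M}_i^t}\nabla F(W_t,\xi)=\sum_{\xi\in\mathcal{B}_t}\nabla F(W_t,\xi)$. Keys that are not touched by a micro-batch contribute $\mathbf{0}$, just as in the DBP proof. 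Deduplication within each micro-batch only merges identical keys before sending; it never drops a contribution. A key that appears in several micro-batches simply receives several partial sums. The normalization must be the full-batch $1/|\mathcal{B}_t|$ and not a per-micro-batch average; I will state this as the accumulation rule used by FWP. Applying a single update at the end then reproduces Eq.~\eqref{eq:sync} exactly, for both $\theta$ and $E$.

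\textbf{Step 3 (independence from clustering and ordering).} Conclude that the result does not depend on which partition clustering picks or on the communication order, since the final sum is invariant under permutation and repartitioning.

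The main obstacle I expect is Step 1 made rigorous. I must argue that no write to the parameter buffers can interleave with an in-flight embedding All2All, meaning the synchronization points at micro-batch boundaries put the single update strictly after the last gradient All2All. I would handle this by stating it explicitly as a happens-before ordering of the streams and noting that the update kernel waits on completion of all $2N$ All2All operations. A secondary caveat is floating-point non-associativity. I would say equivalence holds in exact arithmetic, with at most summation-order rounding differences in practice.
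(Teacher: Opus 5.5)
Your proposal is correct and follows essentially the same route as the paper. No parameter write occurs inside the frozen window, so every micro-batch gradient is evaluated at $W_t$, and summing over any disjoint partition of $\mathcal{B}_t$ (clustered or not) with the full-batch factor $1/|\mathcal{B}_t|$ recovers Eq.~\eqref{eq:sync}. Your version is also more careful than the paper's: it uses the frozen-window fact exactly where it is needed, namely to justify writing $\nabla F(W_t,\xi)$ for every micro-batch, which the paper attributes to step (b). It also states the full-batch normalization as an explicit assumption, whereas the paper leaves it implicit.
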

\begin{proof}
Consider the batch $\mathcal{B}_t$ with $N$ micro-batches $\{\mathcal{M}_1^t, \mathcal{M}_2^t, \cdots, \mathcal{M}_N^t\}$, such that $\mathcal{B}_t = \bigcup_{i=1}^N \mathcal{M}_i^t$. 
Key-centric sample clustering produces an alternative partition $\{\tilde{\mathcal{M}}_1^t, \tilde{\mathcal{M}}_2^t, \cdots, \tilde{\mathcal{M}}_N^t\}$.
After the $N$-th micro-batch transmits its gradients, FWP updates model parameters as follows:
\begin{align}
    W_{t+1} &= W_t - \eta \frac{1}{|\mathcal{B}_t|} \sum_{i=1}^{N} \sum_{\xi \in \tilde{\mathcal{M}}_i^t}\nabla F(W_t,\xi) \\
    &\overset{(a)}{=} W_t - \eta \frac{1}{|\mathcal{B}_t|} \sum_{i=1}^{N} \sum_{\xi \in \mathcal{M}_{i}^t}\nabla F(W_t,\xi)  \\
    &\overset{(b)}{=} W_t - \eta \frac{1}{|\mathcal{B}_t|} \sum_{\xi \in \mathcal{B}_t}\nabla F(W_t,\xi)
\end{align}
where $(a)$ follows from summation over disjoint partitions of the same set $\mathcal{B}_t$. Altering the sample order through clustering does not change the final sum of original gradients.
$(b)$ follows $\mathcal{B}_t = \bigcup_{i=1}^N \mathcal{M}_i^t$ as no parameter update occurs between micro-batches. Thus, $\nabla F(W_t, \xi)$ for $\xi \in \mathcal{M}_{i}^t$ is computed on the identical frozen $W_t$, regardless of the micro-batch index $i$.
\end{proof}

\begin{corollary}
(Consistency of NestPipe)
The nested combination of DBP and FWP satisfies Definition 1 at each step $t$.
\end{corollary}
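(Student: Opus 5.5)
The corollary follows by induction on the step index $t$, using the two propositions as the inductive step. The key observation is that DBP and FWP act on disjoint parts of each step. DBP decides \emph{which parameter values} enter the forward propagation of $\mathcal{B}_{t+1}$. FWP decides \emph{how the gradient} of a batch is computed and aggregated, given the parameters it starts from.

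The induction hypothesis is that, when batch $\mathcal{B}_t$ begins its Fwd/Bwd stage under NestPipe, the parameters in its active buffer, together with the replicated dense layers, equal the synchronous iterate $W_t$. For the base case $t=0$, both systems start from the same initialization $W_0$. The prefetch buffer for $\mathcal{B}_0$ has no predecessor batch, so no intersection synchronization is needed.

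For the inductive step, we first take the hypothesis that $\mathcal{B}_t$ operates on $W_t$. Inside the frozen window, its $N$ micro-batches (clustered or not) all compute gradients at this same $W_t$, and the update is applied once after the last gradient All2All. By the Consistency of FWP proposition, the resulting parameters are exactly the right-hand side of Eq.~\eqref{eq:sync}, namely $W_{t+1}$. These updated values reside in $H_{\mathrm{act}}$, and $\theta_{t+1}$ is replicated via AllReduce. We then invoke the Consistency of DBP proposition, which states that after dual-buffer synchronization the parameters available to $\mathcal{B}_{t+1}$ are exactly $W_{t+1}$. This re-establishes the hypothesis at $t+1$.

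The main obstacle is justifying that the nesting introduces no interaction the two propositions do not cover. Each proposition is proved with the other mechanism held abstract. DBP assumes the update of $\mathcal{B}_t$ is complete when synchronization occurs, and FWP assumes its input is fresh. I would address this by an ordering argument on the pipeline schedule. The intersection synchronization for $\mathcal{B}_{t+1}$ is placed after the final parameter update of $\mathcal{B}_t$, which FWP applies only after all $N$ micro-batches finish. The embedding All2All of $\mathcal{B}_{t+1}$ is launched only after this synchronization, so FWP never overlaps communication across batch boundaries; this excludes the six-stage scheme that causes one-step asynchrony. Because writeback to host DRAM of $\mathcal{B}_t$ runs concurrently, I would also check that later prefetches of keys outside $\mathcal{K}(\mathcal{B}_t)$ are unaffected. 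Such keys have zero gradient, as in Eq.~\eqref{eq:dbp}. For keys in $\mathcal{K}(\mathcal{B}_t)$ needed at step $t+1$, the device-to-device copy overwrites any host-fetched value. One gap is a key from $\mathcal{B}_{t-1}$ prefetched for $\mathcal{B}_{t+1}$ before $\mathcal{B}_{t-1}$'s writeback lands. I expect this to need either an explicit assumption that writeback completes before the retrieval stage two batches later, or the observation that such a key is either in the $\mathcal{B}_t$ intersection chain or was already synchronized. With this ordering established, no stale $W_{t-\tau}$ is ever used and the gradient aggregation matches Eq.~\eqref{eq:sync}, so Definition 1 holds at every $t$.
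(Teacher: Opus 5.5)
Your proof is correct and matches the paper's: it also composes Proposition 1 (the fresh $W_t$ is in place before $\mathcal{B}_t$ starts forward propagation) with Proposition 2 (the frozen-window update equals Eq.~\eqref{eq:sync}), and your induction on $t$ and schedule-ordering checks spell out what its ``at each step $t$'' leaves implicit. On the writeback race you flag, only your first resolution works under the paper's description of retrieval, because a key in $\mathcal{K}(\mathcal{B}_{t-1})\cap\mathcal{K}(\mathcal{B}_{t+1})\setminus\mathcal{K}(\mathcal{B}_t)$ is untouched by the $\mathcal{B}_t\cap\mathcal{B}_{t+1}$ copy; you therefore need the assumption that $\mathcal{B}_{t-1}$'s writeback lands before $\mathcal{B}_{t+1}$'s retrieval, which is what the proof of Proposition 1 silently presupposes when it calls the prefetched $e_k^t$ ``already up-to-date''.
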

\begin{proof}
Proposition 1 guarantees that DBP establishes the correct initial state at inter-batch boundaries.
At each step $t$, the latest $W_t$ is established before $\mathcal{B}_{t}$ begins forward propagation. 
Proposition 2 guarantees that FWP preserves gradient equivalence within batch $\mathcal{B}_{t}$.
The gradients computed during the frozen window equals the full-batch gradients. 
Thus, their nested composition satisfies Eq.~\eqref{eq:sync} at each training step $t$.
\end{proof}

\section{Performance Evaluation}\label{sec:Evaluation}
\begin{table}[t]
\centering
\caption{Overall latency (ms) of different methods on GPU and NPU clusters, and ablation study of DBP and FWP. 
}
\label{tab:overall_performance_1536}
\resizebox{0.5\textwidth}{!}{
\begin{tabular}{@{}ccc c c c ccc c@{}}
\toprule
\multirow{2}{*}{\textbf{Cluster}} & \multirow{2}{*}{\textbf{Dataset}}  & \multirow{2}{*}{\textbf{Method}} & \textbf{Step} & \multirow{2}{*}{\textbf{Speedup}} & \multicolumn{2}{c}{\textbf{Ablation Study}}  \\ \cmidrule(lr){6-7}
& &  & \textbf{Latency} & & \textbf{Lookup} & \textbf{Comm.}  \\ \midrule

 & \multirow{4}{*}{Industrial} 
& TorchRec       & 5793.83 & 1.00$\times$  & 2870.99 & 1207.85  \\
1536 &  & 2D-SP          & 4914.01 & 1.18$\times$  & 2766.68 & 438.36  \\
 NPUs &  & UniEmb       & 2919.76 & 1.98$\times$  & 36.21   & 1169.01  \\
&  & \cellcolor{gray!20}\textbf{NestPipe}  &\cellcolor{gray!20}\textbf{1895.98} & \cellcolor{gray!20}\textbf{3.06$\times$}    & \cellcolor{gray!20}\textbf{30.19}   & \cellcolor{gray!20}\textbf{154.23}  \\ \midrule

 &  
& TorchRec       & 4207.94 & 1.00$\times$ & 856.63 & 312.19 \\
128 & KuaiRand & 2D-SP          &4052.98 & 1.04$\times$  & 907.87 & 107.23   \\
GPUs & -27K & UniEmb       & 3384.05 & 1.24$\times$  &  23.49  & 327.29 \\
&  & \cellcolor{gray!20}\textbf{NestPipe}  &\cellcolor{gray!20}\textbf{3090.45} & \cellcolor{gray!20}\textbf{1.36$\times$}    & \cellcolor{gray!20}\textbf{12.77}   & \cellcolor{gray!20}\textbf{38.95}  \\

\bottomrule
\end{tabular}}
\end{table}

Our evaluation aims to answer five research questions.
\begin{itemize}
    \item \textbf{RQ1 (Efficiency):} Does NestPipe improve end-to-end recommendation training efficiency?
    \item \textbf{RQ2 (Consistency):} Does NestPipe preserve training semantics relative to standard synchronous baseline?
    \item \textbf{RQ3 (Scalability):} Does NestPipe maintain high scaling factor as the training cluster scales expand?
    \item \textbf{RQ4 (Sensitivity):} How does NestPipe perform across different micro-batch sizes and workload characteristics?
    \item \textbf{RQ5 (Orthogonality):} Is NestPipe complementary to existing communication-reduction techniques?
\end{itemize}

\begin{figure*}[htbp]
    \centering
    \includegraphics[width=\textwidth]{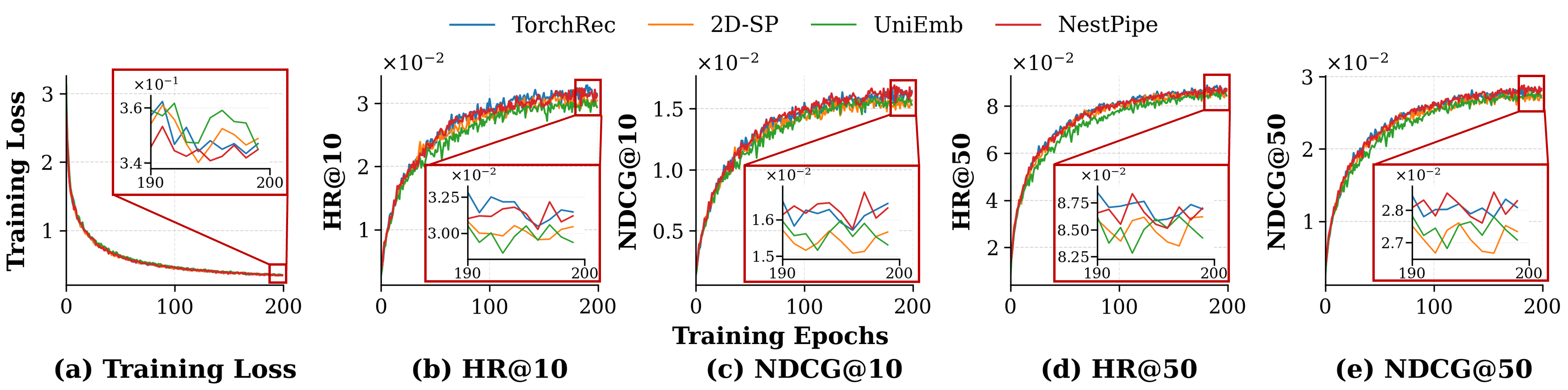}
    \caption{
    The training loss and accuracy curve of different methods.
    }
    \label{fig:accuracy}
\end{figure*}

\begin{table*}[t]
\centering
\caption{Scaling performance comparison of different methods. QPS is reported in the scale of $10^5$. The speedup is computed against TorchRec under the same cluster configuration. The scaling factor is normalized to the 128-worker baseline.
}
\label{tab:scaling_performance}
\resizebox{0.9\textwidth}{!}{
\begin{tabular}{@{} c ccc ccc ccc ccc @{}}
\toprule
\multirow{2}{*}{\textbf{\# Worker}} & \multicolumn{3}{c}{\textbf{TorchRec}} & \multicolumn{3}{c}{\textbf{2D-SP}} & \multicolumn{3}{c}{\textbf{UniEmb}} & \multicolumn{3}{c}{\textbf{NestPipe}} \\ 
\cmidrule(lr){2-4} \cmidrule(lr){5-7} \cmidrule(lr){8-10} \cmidrule(l){11-13}
& \textbf{QPS} & \textbf{Speedup}  & \textbf{Scaling}  & \textbf{QPS} & \textbf{Speedup} & \textbf{Scaling} & \textbf{QPS} & \textbf{Speedup}  & \textbf{Scaling} & \textbf{QPS} & \textbf{Speedup}  & \textbf{Scaling} \\ \midrule

128  & 0.26 &1.00$\times$  &    -    & 0.27 &1.04$\times$ &   -     & 0.33 &1.27$\times$ &    -    & \cellcolor{gray!20}\textbf{0.37} &\cellcolor{gray!20}\textbf{1.42$\times$} & - \\
256  & 0.47 &1.00$\times$ & 91.23\% & 0.49 &1.04$\times$ & 91.36\% & 0.63 &1.34$\times$ & 96.13\% & \cellcolor{gray!20}\textbf{0.72} &\cellcolor{gray!20}\textbf{1.53$\times$} & \cellcolor{gray!20}\textbf{98.39\%} \\
512  & 0.68 &1.00$\times$ & 66.64\% & 0.75 &1.10$\times$ & 69.64\% & 1.17 &1.72$\times$ & 89.21\% & \cellcolor{gray!20}\textbf{1.43} &\cellcolor{gray!20}\textbf{2.10$\times$} & \cellcolor{gray!20}\textbf{97.33\%} \\
1024 & 1.09 &1.00$\times$ & 53.36\% & 1.22 &1.12$\times$ & 56.67\% & 2.04 &1.87$\times$ & 78.00\% & \cellcolor{gray!20}\textbf{2.80} &\cellcolor{gray!20}\textbf{2.57$\times$} & \cellcolor{gray!20}\textbf{95.63\%} \\
1536 & 1.36 &1.00$\times$ & 44.34\% & 1.60 &1.18$\times$ & 49.32\% & 2.65 &1.98$\times$ & 67.62\% & \cellcolor{gray!20}\textbf{4.14} &\cellcolor{gray!20}\textbf{3.06$\times$} & \cellcolor{gray!20}\textbf{94.07\%} \\ 

\bottomrule
\end{tabular}}
\end{table*}

\subsection{Experimental Setup}
\label{sec:exp_setup}
\textbf{Training Specification.}
We evaluate our proposed framework on two industrial-grade production clusters, \ie, 1536-NPU cluster and 128-GPU cluster.
All experiments are conducted on \textit{KuaiRand-27K} \cite{gao2022kuairand} and \textit{Industrial} datasets.
Considering that publicly available datasets are insufficient in cardinality to meet \textit{O(1k)}-scale training requirements, we include an industrial recommendation dataset to reflect large-scale data distributions and sparsity patterns.
We adopt HSTU~\cite{zhai2024actions} and FUXI \cite{ye2025fuxi} as the backbone models, which are widely used in industrial recommendation tasks. 
Unless otherwise specified, all experiments are conducted on the NPU cluster with 1,536 workers, using HSTU on the Industrial dataset.

\textbf{Baselines.}
To rigorously assess the training performance, we compare NestPipe against the following three baselines. 
\begin{itemize}[leftmargin=*]
    \item \textbf{TorchRec}~\cite{10.1145/3523227.3547387} as the de facto open-source framework provides full-stack sparsity primitives needed for large-scale embedding tables and utilizes the hybrid decentralized architecture to accelerate embedding training.
    \item \textbf{2D-SP}~\cite{zhang2025two} is the SOTA two-dimensional sparse parallelism solution, which mitigates communication overhead by restricting All2All communication domain within local worker groups. Following the reported optimal configuration, the number of parallelism groups is set to 4.
    \item \textbf{UniEmb} is the industrial-grade distributed training engine that integrates mature sparse optimization commonly adopted in production recommendation systems, including embedding sharding, dynamic hash table, operator fusion optimization, and asynchronous prefetch pipeline.
\end{itemize}

\textbf{Evaluation Metrics.}
\textit{(1) Step latency} records the average end-to-end latency of repeated training steps, which mainly includes computing, lookup, and exposed communication time.
\textit{(2) HR@$K$ and NDCG@$K$} are adopted to evaluate model accuracy, which respectively measure the hit rate of relevant items and ranking quality within the top-$K$ list.
\textit{(3) Throughput (QPS)} is defined as the number of processed samples per second. \textit{(4) Resource utilization ratio} is defined as the percentage of time that computing cores remain active within a given training period. \textit{(5) Exposed comm. ratio} is the percentage of physical All2All communication latency that is not hidden behind dense computation. 

\subsection{RQ1: End-to-end Efficiency}
We first evaluate the end-to-end training efficiency on both NPU and GPU clusters. Table~\ref{tab:overall_performance_1536} summarizes the step latency and the major sparse overheads under two representative settings, including HSTU on the Industrial dataset and FUXI on the KuaiRand-27K dataset.
We observe that NestPipe achieves the best end-to-end performance compared to baselines, delivering 3.06$\times$ speedup on NPU cluster and 1.36$\times$ speedup on GPU cluster.
In particular, TorchRec encounters severe lookup (2,870.99ms) and communication (1,207.85ms) bottlenecks during training with 1,536 workers.
2D-SP only mitigates communication time to 438.36ms by restricting the communication domain, while UniEmb hides lookup latency of 2,834.78ms, leaving the other bottleneck unaddressed.
The training speedup is limited to 1.18$\times$ and 1.98$\times$, respectively.
In contrast, the performance gain of NestPipe stems from the hierarchical
sparse parallelism. 
Table~\ref{tab:overall_performance_1536} also serves as an ablation study evaluating the individual contributions of DBP and FWP strategies.
DBP-only alleviates about 98\% lookup latency for different datasets by constructing a staleness-free five-stage pipeline.
FWP-only reduces the exposed comm. ratio to 13\% by overlapping All2All communication with dense computation, while the baselines (including SOTA 2D-SP method) expose 100\% of communication time.
These results confirm that NestPipe simultaneously addresses lookup and communication bottlenecks, achieving efficient embedding training via hardware-agnostic optimization.

\begin{figure}
    \centering
    \includegraphics[width=0.45\textwidth]{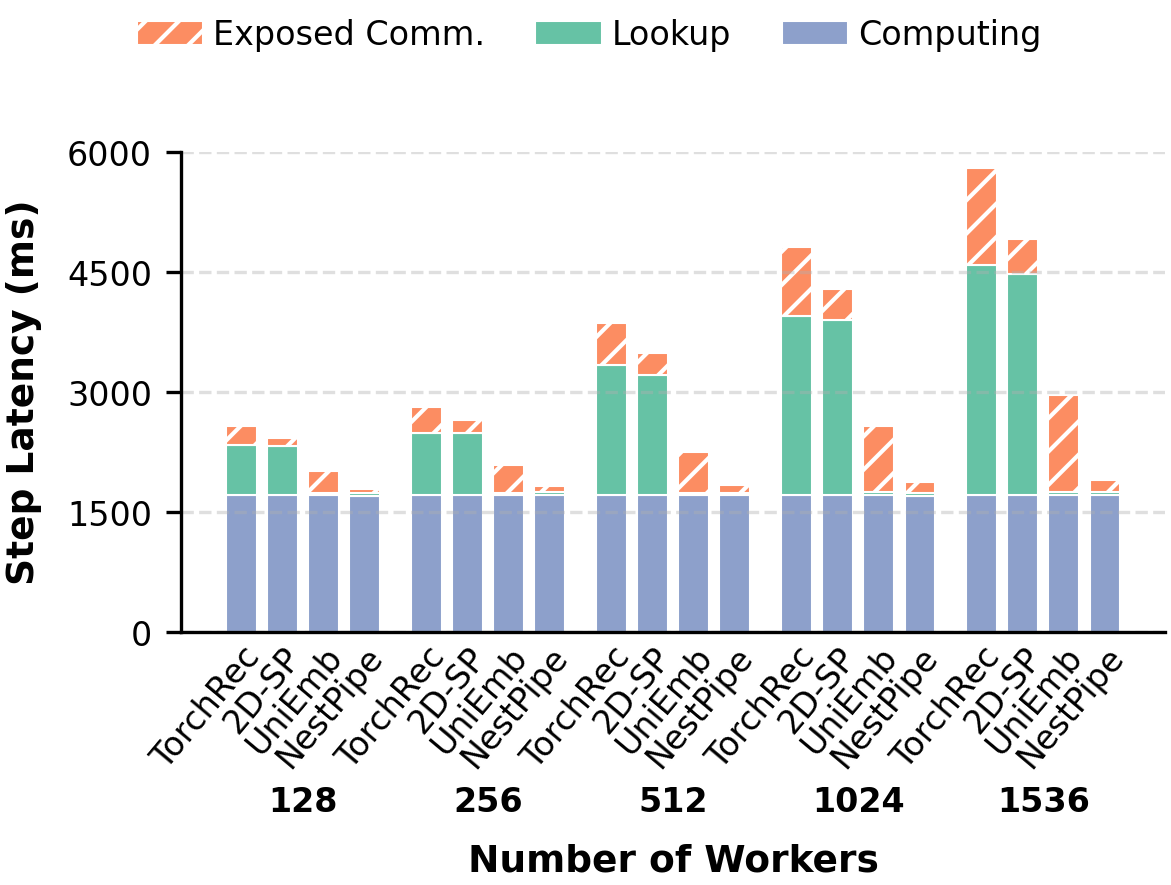} 
    \caption{Step latency breakdown of different methods for varying cluster sizes. 
    }
    \label{fig:scaling}
    \vspace{-2mm}
\end{figure}

\subsection{RQ2: Training Consistency}
To empirically verify consistency, we train the FUXI model using the KuaiRand-27K dataset. The training loss and ranking metrics of different methods are reported in Fig. ~\ref{fig:accuracy}.
NestPipe closely follows the synchronous baseline (TorchRec) throughout the training process, whereas the ranking metrics of other methods exhibit a significant decline. 
For example, the HR@10 and HR@50 metrics of UniEmb decrease by 2.1$\times10^{-3}$ and 2.7$\times10^{-3}$ respectively, indicating that even limited embedding staleness can affect recommendation quality.
Similarly, 2D-SP also experiences measurable accuracy degradation, dropping by 1.0$\times10^{-3}$ in HR@10 and 0.7$\times10^{-3}$ in NDCG@10.
Although 2D-SP restricts the All2All communication domain, it changes the original communication topology and gradient aggregation logic, which can perturb training convergence.
In contrast, the differences of NestPipe across all four ranking metrics are uniformly less than 0.3$\times10^{-3}$. 
These observations are consistent with the theoretical analysis in Section~\ref{sec:convergence}. DBP performs dual-buffer synchronization before forward propagation, and FWP implements overlap only within the frozen window where no parameter update occurs. Therefore, NestPipe preserves synchronous training semantics in practice while providing throughput gains.

\begin{figure}
    \centering
    \includegraphics[width=0.45\textwidth]{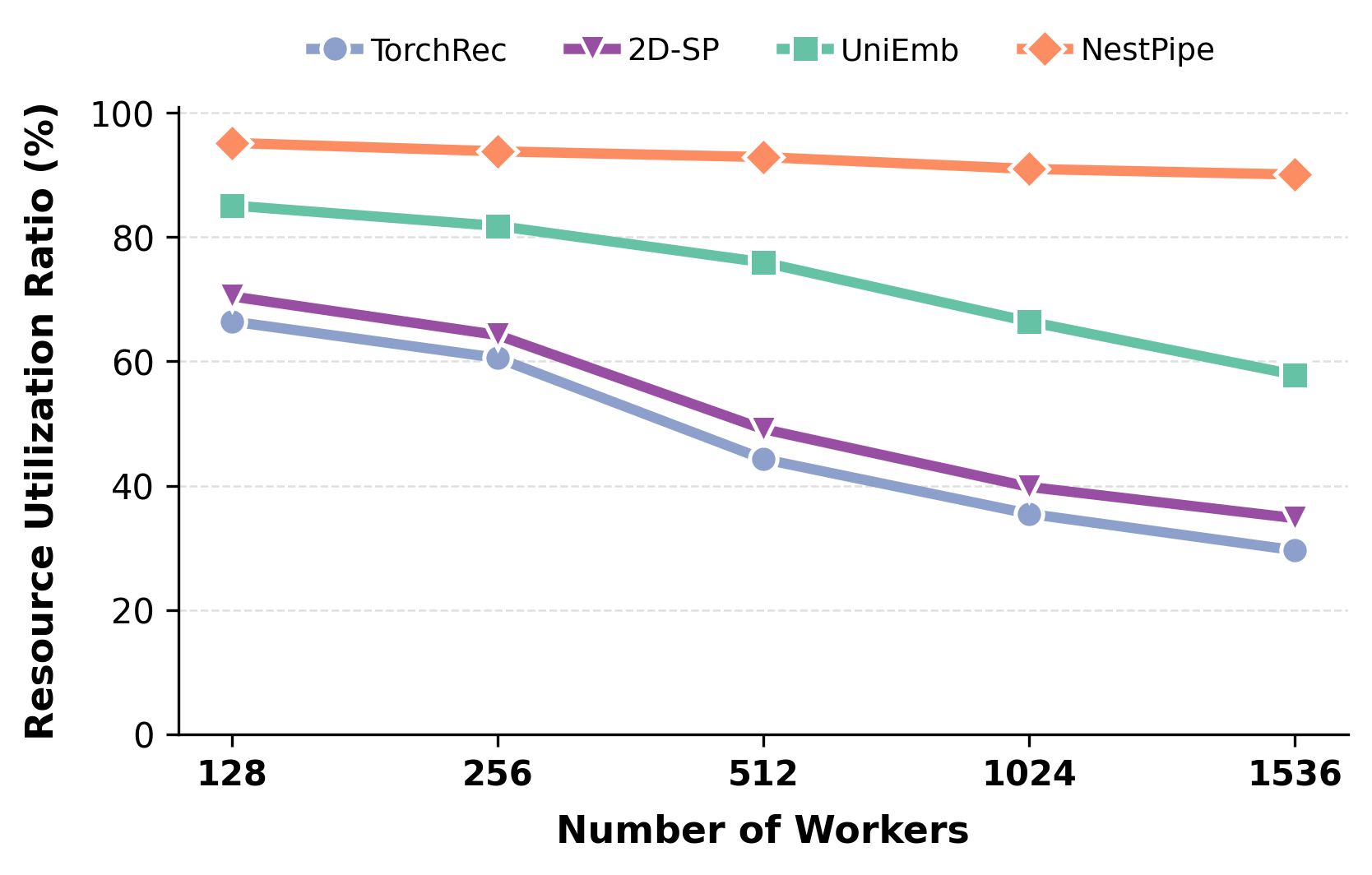} 
    \caption{Resource utilization ratio of different methods for varying cluster sizes.
    }
    \label{fig:utilization}
    \vspace{-2mm}
\end{figure}

\begin{figure}[t]
    \centering
    \begin{subfigure}{0.49\columnwidth}
        \centering
        \includegraphics[width=\linewidth]{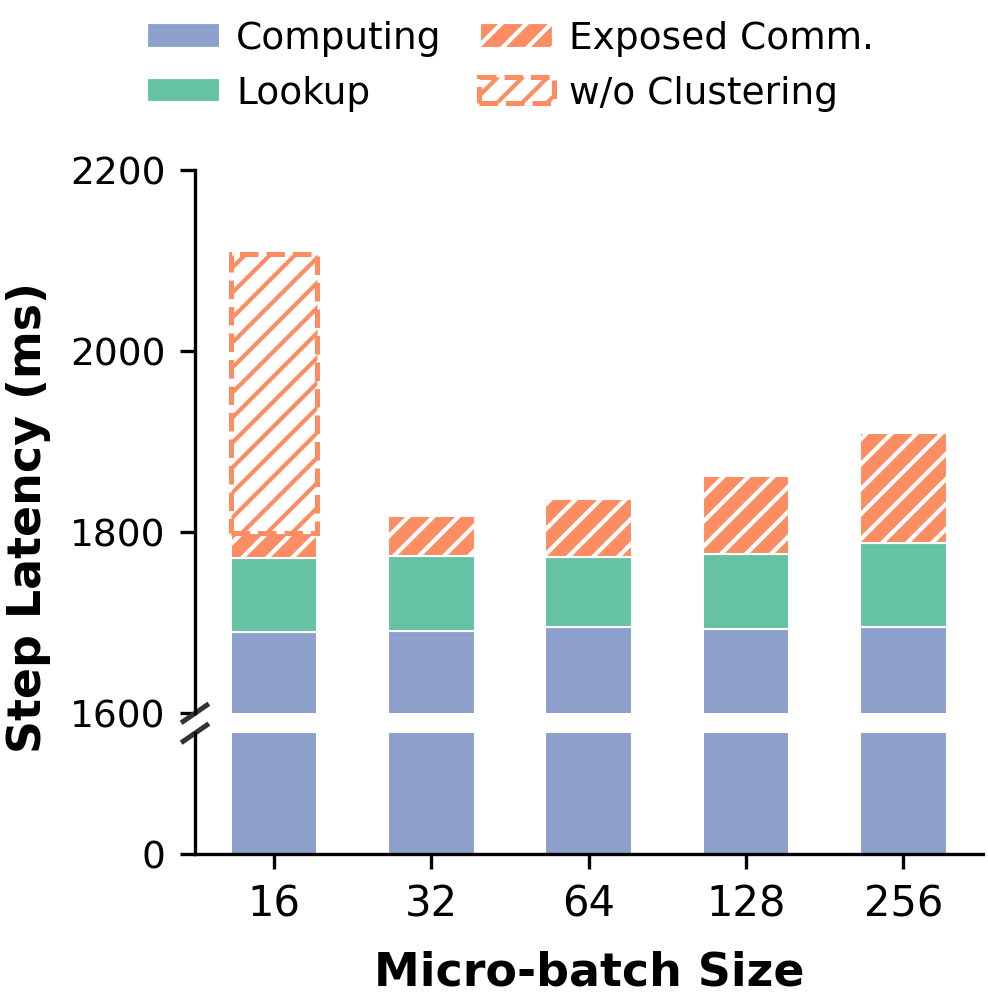}
        \caption{Latency Breakdown}
        \label{fig:mb_latency}
    \end{subfigure}
    \hfill
    \begin{subfigure}{0.49\columnwidth}
        \centering
        \includegraphics[width=\linewidth]{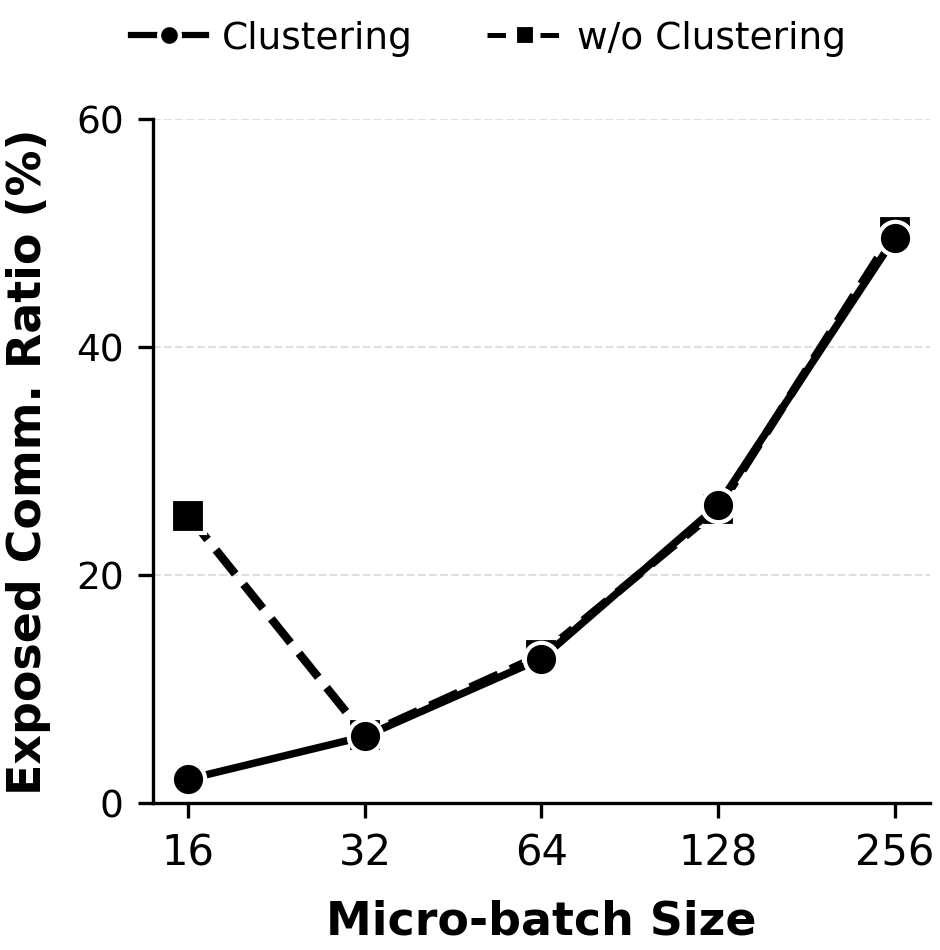}
        \caption{Exposed Comm. Ratio}
        \label{fig:mb_comm}
    \end{subfigure}
    
    \caption{Impact of micro-batch size on step latency and exposed comm. ratio under a constant batch size of 512.
    }
    \label{fig:micro_batch_analysis}
\end{figure}

\begin{figure*}
    \centering
    \includegraphics[width=\textwidth]{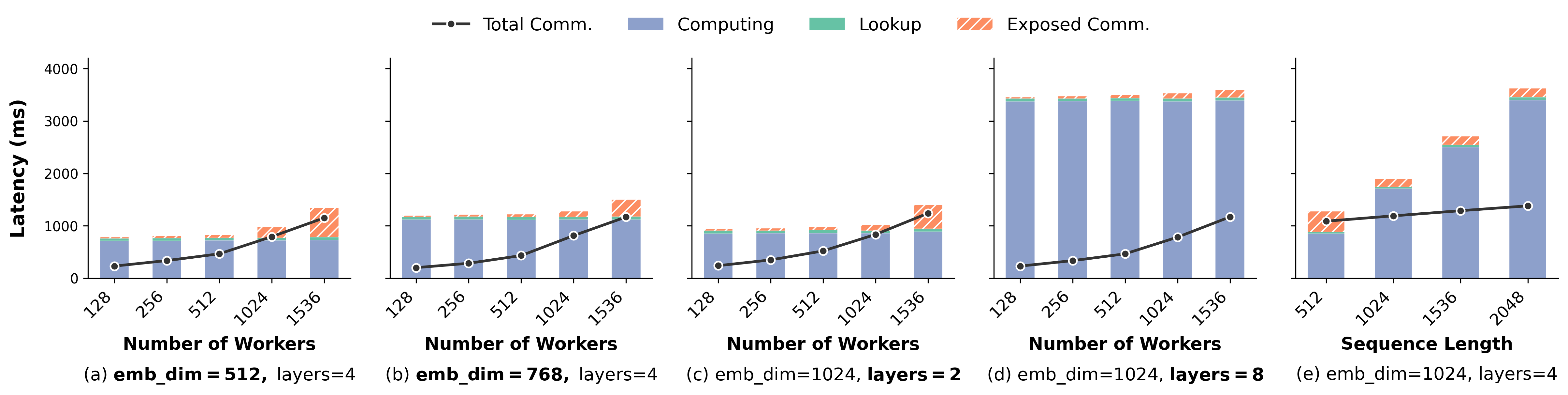} 
    \caption{Step latency breakdown for varying embedding dimensions, dense layers, and sequence lengths.
    }
    \label{fig:model_size}
    \vspace{2mm}
\end{figure*}

\subsection{RQ3: Scalability}
\label{sec:scalability}
\textbf{Scaling Factor.} We next investigate whether the benefit of NestPipe sustains as the training cluster scales from 128 to 1,536 workers.
Table~\ref{tab:scaling_performance} summarizes the throughput and the corresponding scaling factor.
We observe that the scaling factor of TorchRec and 2D-SP drops significantly to 44.34\% and 49.32\% when scaling to 1,536 workers. 
UniEmb maintains 89.21\% scaling efficiency up to 512 workers, but its advantage weakens at larger scales.
Our proposed framework preserves the scaling efficiency of 94.07\% even at the massive scale of 1,536 workers.
Fig.~\ref{fig:scaling} further illustrates the step latency breakdown. For baselines, the rapidly increasing lookup and communication overhead offset the additional computational resources, ultimately degrading training efficiency.
With the help of nested pipelining, NestPipe can reduce both lookup and communication overhead, which translates into 1.42-3.06$\times$ training speedup at different scales.

\textbf{Resource Utilization.} We further report the utilization ratio across varying cluster sizes in Fig.~\ref{fig:utilization}. 
Specifically, TorchRec and 2D-SP fall from 66.4\% and 70.5\% at 128 workers to 29.6\% and 34.8\% at 1,536 workers, respectively. 
Their strictly synchronous paradigms are insufficient at scale, leaving the expensive computing resources idle while waiting for data preprocessing and network transmission.
UniEmb improves resource utilization by pipeline parallelism, but still deteriorates when communication dominates. 
In comparison, NestPipe consistently maintains above 90\% hardware utilization via coordinated computation and communication stream scheduling, ensuring that the expensive computing resources remain highly active.
The above results imply that NestPipe achieves superior scalability in large-scale embedding training. 

\subsection{RQ4: Sensitivity Analysis}


\textbf{Impact of Micro-batch Size.} 
To verify the importance of key-centric sample clustering, we change the micro-batch size from 16 to 256 when training HSTU using the Industrial dataset on 512 NPU workers.
Fig.~\ref{fig:micro_batch_analysis} reveals that naively reducing micro-batch size does not translate to practical training speedup.
When the micro-batch size is reduced to 16, the physical All2All communication time inflates to 1,331.33ms, causing the actual exposed ratio 25.2\% (dashed line) to deviate from the theoretical $1/N$ bound.
In comparison, our sample clustering maximizes redundancy within small micro-batches and reduces repeated transmission. Therefore, the exposed communication payload decreases to 27.71ms.
\textbf{Impact of Model Scale.} We vary the embedding dimension ($emb\_dim\in\{512, 768, 1024\}$) and dense layers ($layers\in\{2, 4, 8\}$) to evaluate the performance of NestPipe across different workloads. As presented in Fig.~\ref{fig:model_size}(a)-(b), adjusting the embedding dimension (fixed at $layers=4$) increases both the communication and computation workload. 
The increased computation duration effectively absorbs the transmission latency, dropping the exposed comm. ratio to 12.9\%.
Fig.~\ref{fig:model_size}(c)-(d) show that scaling up the dense layers prolongs the computation duration, with no additional impact on lookup and communication payload. The 8-layer configuration provides the 3,393.12ms computation window that covers the communication time of 1,169.8ms, minimizing the exposed transmission latency to 146.23ms.
Even in the 2-layer configuration with narrow computation window (892.78ms), NestPipe still achieves exposed comm. ratio of 27.9\%. 

\textbf{Impact of Sequence Length.}
We quantify the impact of input sequence lengths varying from 512 to 2048 in Fig.~\ref{fig:model_size}(e). 
The dense computation time increases from 850.08ms to 3,399.96ms.
In comparison, the growth trend of actual lookup and communication time is moderated by key deduplication operation during embedding preprocessing.
Consequently, NestPipe can exploit the continuously expanding computation duration to cover the bulk of network transmission overhead. 
The communication latency exposed by NestPipe is strictly limited to 165.12ms at 2048 length, aligning with the theoretical exposed ratio. 
These results verify that our design is robust across compute-bound and communication-bound workloads. 

\begin{table}[t]
\centering
\caption{Integration of NestPipe and 2D-SP optimizations on 1,536 workers. QPS is reported in the scale of $10^5$, and the scaling factor is normalized to the 128-worker baseline.}
\label{tab:orthogonality_results}
\resizebox{\columnwidth}{!}{%
\begin{tabular}{l c c c c}
\toprule
\textbf{Method} & \textbf{\begin{tabular}[c]{@{}c@{}}Total Comm.\\ Latency (ms)\end{tabular}} & \textbf{\begin{tabular}[c]{@{}c@{}}Exposed Comm.\\ Latency (ms)\end{tabular}} & \textbf{\begin{tabular}[c]{@{}c@{}} QPS\end{tabular}} & \textbf{Scaling} \\ \midrule
TorchRec & 1207.85 & 1207.85 & 1.36 &  44.34\% \\
2D-SP & 438.36 & 438.36 & 1.60 &    49.32\%\\
NestPipe & 1185.60 & 154.23 & 4.14 &    94.07\%\\ \midrule
\cellcolor{gray!20}\textbf{NestPipe+2D-SP} & \cellcolor{gray!20}\textbf{452.34} & \cellcolor{gray!20}\textbf{55.64} & \cellcolor{gray!20}\textbf{4.32} & \cellcolor{gray!20}\textbf{97.17\%} \\ \bottomrule
\end{tabular}%
}
\end{table}

\subsection{RQ5: Complementary Optimization}
As previously highlighted, NestPipe is orthogonal to existing communication payload optimization methods like 2D-SP~\cite{zhang2025two}.
To validate the generality, we implement the NestPipe+2D-SP solution on the 1,536-worker cluster.
In this integrated setup, 2D-SP partitions workers into local groups to spatially reduce the physical All2All payload. 
Concurrently, NestPipe exploits the frozen window to temporally overlap this reduced communication time with the dense computation.
As summarized in Table~\ref{tab:orthogonality_results}, integrating 2D-SP aggressively compresses the raw communication latency to 452.34ms. Consequently, the $1/N$ exposed ratio proportionally shrinks to 55.64ms, ultimately boosting the throughput to 4.32$\times 10^5$ and achieving 97.17\% scaling factor on 1,536 workers.
Rather than competing with communication-reduction methods, NestPipe can amplify their benefits and unlock higher efficiency for large-scale embedding training.


\section{Conclusion}\label{sec:conclusion}
In this paper, we propose NestPipe, an efficient, consistent, and scalable framework for large-scale decentralized embedding training. 
Our framework tackles the lookup and communication bottlenecks through novel hierarchical sparse parallelism. DBP strategy constructs a staleness-free pipeline via lightweight buffer synchronization. FWP strategy leverages the parameter freezing window to decouple and hide communication latency, without breaking the semantics of standard synchronous training. Empirical evaluations on large-scale production clusters show that NestPipe provides superior throughput and scalability compared to existing methods.


\section*{Acknowledgment}
\noindent We gratefully acknowledge the authors at Huawei for their assistance in the Ascend hardware environment setup and system maintenance during the experiments.


\bibliographystyle{IEEEtran}
\bibliography{refs}











\end{document}